\documentclass[aps,10pt,amsmath,amssymb,nofootinbib,notitlepage,tightenlines,longbibliography,twocolumn,superscriptaddress]{revtex4-2}
\pdfoutput=1

\usepackage[dvipsnames]{xcolor}
\usepackage[normalem]{ulem}

\usepackage{amsmath,amsthm,amsfonts,amssymb}
\usepackage{bm}
\usepackage{mathrsfs,soul}
\usepackage{enumerate}

\usepackage{graphicx}
\usepackage[caption=false]{subfig}

\usepackage[breakable,skins]{tcolorbox}
\usepackage{dcolumn}
\usepackage{chngcntr}
\usepackage{enumitem}
\usepackage[ruled]{algorithm2e}

\usepackage{hyperref}
\def\fref#1{Figure~\ref{#1}}
\def\eqref#1{Eq.~(\ref{#1})}
\newcommand{\sm}{Appendix}
\newcommand{\mmsec}{Methods}

\usepackage{hyphenat}

\newcommand{\inlineheading}[1]{\textbf{#1~-- }\ignorespaces}
\newcommand{\inlinesubheading}[1]{\textit{#1~-- }\ignorespaces}

\def\ket#1{|#1\rangle}
\def\bra#1{\langle#1|}

\newcommand{\past}[1]{\overleftarrow{#1}}
\newcommand{\future}[1]{\overrightarrow{#1}}


\newcommand{\ints}{\mathbb{Z}}

\newcommand{\encod}{\mathcal{E}}
\newcommand{\morph}{\mathcal{F}}
\newcommand{\mr}{\kappa}

\newtheorem{theorem}{Theorem}
\newtheorem{lemma}{Lemma}

\newtheorem{definition}{Definition}

\newcommand{\caphead}[1]{\textbf{#1}}

\usepackage{times}

\makeatletter
\newcommand*{\balancecolsandclearpage}{%
  \close@column@grid
  \clearpage
  \twocolumngrid
}
\makeatother

\begin{document}

\title{Provable superior accuracy in machine learned quantum models}

\newcommand{\spms}{Nanyang Quantum Hub, School of Physical and Mathematical Sciences,\\ Nanyang Technological University, Singapore 637371}
\newcommand{\cinst}{Complexity Institute, Nanyang Technological University, Singapore 637335}
\newcommand{\iqoqi}{Institute for Quantum Optics and Quantum Information,\\Austrian Academy of Sciences, Boltzmanngasse 3, Vienna 1090, Austria}
\newcommand{\sustc}{Shenzhen Institute for Quantum Science and Engineering and Department~of~Physics,\\Southern University of Science and Technology, Shenzhen 518055, China}
\newcommand{\cqcct}{Centre for Quantum Computation and Communication Technology, Australian Research Council,\\ Centre for Quantum Dynamics, Griffith University, Brisbane, Queensland 4111, Australia}
\newcommand{\cqt}{Centre for Quantum Technologies, National University of Singapore, 3 Science Drive 2, Singapore 117543}
\newcommand{\dahlem}{Dahlem Center for Complex Quantum Systems, Freie Universit\"at Berlin, 14195 Berlin, Germany}
\newcommand{\wolfsen}{Wolfson College, University of Oxford, Oxford, OX2 6UD, United Kingdom}

\author{Chengran Yang}
\email{cr.yang@nus.edu.sg}
\affiliation{\spms}
\affiliation{\cinst}
\affiliation{\cqt}
\author{Andrew~J.~P.~Garner}
\affiliation{\spms}
\affiliation{\cinst}
\affiliation{\iqoqi}
\author{Feiyang Liu}
\affiliation{\sustc}
\author{Nora~Tischler}
\affiliation{\cqcct}
\affiliation{\dahlem}
\author{Jayne Thompson}
\affiliation{\cqt}
\author{Man-Hong Yung}
\affiliation{\sustc}
\author{Mile Gu}
\email{mgu@quantumcomplexity.org}
\affiliation{\spms}
\affiliation{\cinst}
\affiliation{\cqt}
\author{Oscar Dahlsten}
\email{dahlsten@sustech.edu.cn}
\affiliation{\sustc}
\affiliation{\wolfsen}

\date{\today}

\begin{abstract} 
In modelling complex processes, the potential past data that influence future expectations are immense. 
Models that track all this data are not only computationally wasteful but also shed little light on what past data most influence the future. 
There is thus enormous interest in dimensional reduction – finding automated means to reduce the memory dimension of our models while minimizing its impact on its predictive accuracy. 
Here we construct dimensionally reduced quantum models by machine learning methods that can achieve greater accuracy than provably optimal classical counterparts. 
We demonstrate this advantage on present-day quantum computing hardware. 
Our algorithm works directly off classical time-series data and can thus be deployed in real-world settings. 
These techniques illustrate the immediate relevance of quantum technologies to time-series analysis and offer a rare instance where the resulting quantum advantage can be provably established.
\end{abstract}

\maketitle

\section*{Introduction}
Predictive modelling permeates quantitative science, from the rise and fall of stock markets to the dynamics of neural spike trains. 
The making of quantitative predictions often relies on learning the potential causal dynamics that underlie vast amounts of  observational data. In this vein, there is a significant interest to explain more with less -- enabling us to pinpoint key observations from the past that have maximal influence on future behaviour.
This motivates efficient models, which generate faithful future predictions while retaining only a small fraction of information from the past. 
The benefits of such models are two-fold. 
Firstly, they help us minimize the memory we need to make predictions about the future. 
Secondly, they enable us to sample a wide array of possible futures by adjusting a few key initial parameters. 
Indeed, much of machine learning, from dimensional reduction and feature detection to auto-encoders, adheres to such philosophy~\cite{van2009dimensionality,alpaydin2020introduction,ng2011sparse}.

\begin{figure}[tbp]
\centering
	\includegraphics[width=0.8\linewidth]{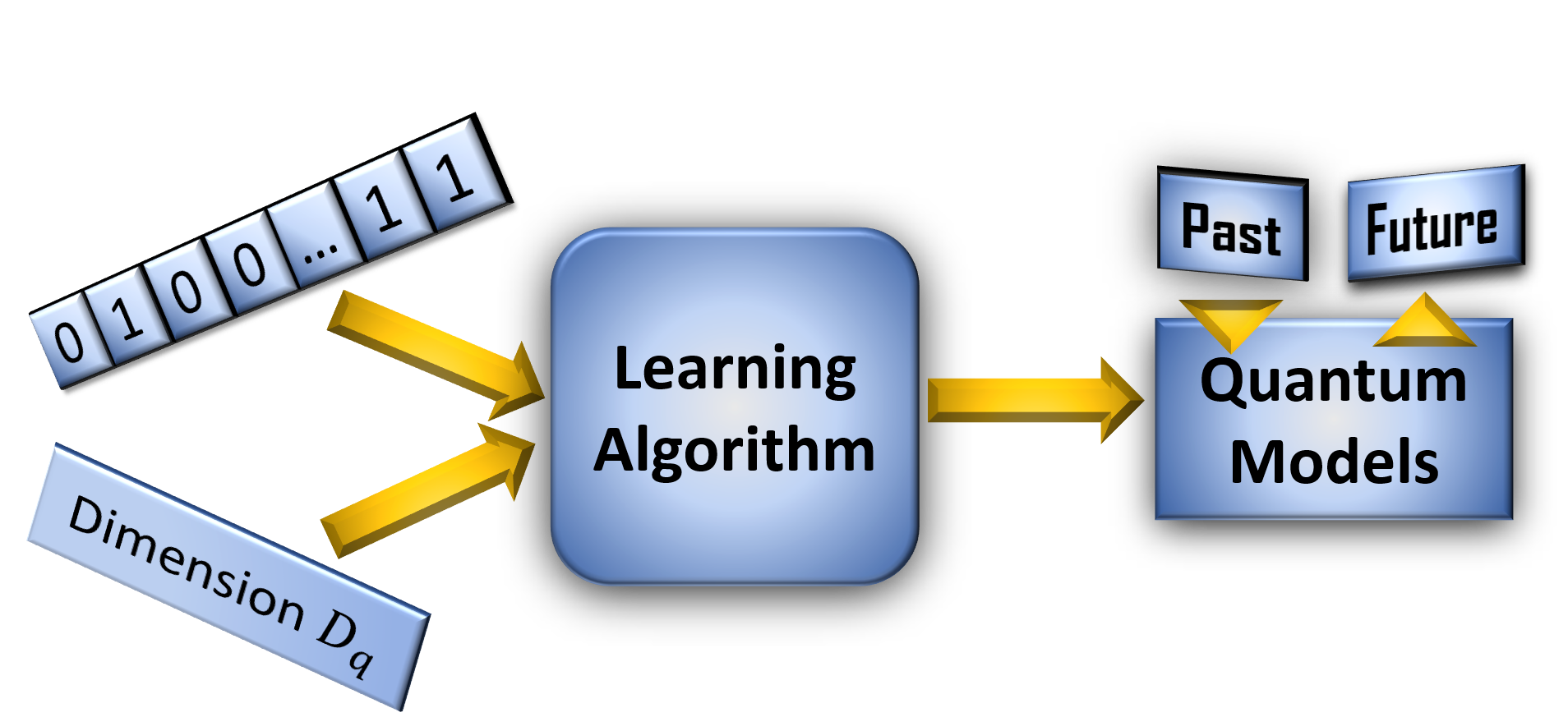}
	\caption{
	\caphead{Scheme for our quantum causal-inference algorithm.}
The algorithm takes as input time-series data drawn from some stochastic process, together with a specification of the number of dimensions of available quantum memory (the number of mutually distinguishable quantum states).
The algorithm identifies a quantum machine that best predicts future outcomes of the process, subject to such memory constraints.
The resulting machines can achieve significantly greater accuracy than classical counterparts.}
	\label{Fig: illustration}
\end{figure}

Even efficient models can be highly memory intensive. Complexity scientists in computational mechanics, for example, have extensively studied the inference of predictive models for stochastic processes~\cite{pearl2000models, Shalizi2001, shalizi2002algorithm, shalizi2014blind}. These processes  govern the behaviour of a system observed at discrete points in time. The goal is then to infer a systematic method of encoding past observations into some machine, whose dynamics allow it to sequentially generate statistically faithful conditional future behaviour at all future time-steps. Sophisticated techniques have been developed for realistic settings where predictive models with memory constraints are trained on finite time-series data -- techniques that have successfully inferred structure in diverse contexts, from distinguishing states of brain activity to causal inference in geomagnetic time-series~\cite{Haslinger2009,Clarke2003,munoz2020general}. Meanwhile, theory indicated that as processes become progressively non-Markovian, the number of distinct configurations a machine needs to store relevant past information can grow without bound~\cite{Elliott2018Continous, elliott2019extreme}. Thus, memory constraints on such models invariably lead to distortions (i.e.\ statistical errors) in predicted futures.

\clearpage
Here we ask: \emph{\ Can memory-limited quantum models -- quantum machines armed with quantum memories -- reduce such distortions?} To answer this, we develop a quantum model discovery algorithm that infers such models directly from raw data (see \fref{Fig: illustration}). Applying this algorithm, we show that our inferred quantum models can achieve accuracies in prediction impossible in provably optimal classical limits and illustrate that these advantages can grow with the non-Markovianity of the processes under study. These results complement recent studies showing how quantum mechanics can reduce memory requirements for the idealistic limiting case of exact modelling~\cite{gu2012quantum,Aghamohammadi2017PRX,Binder2017,yang2018matrix, Ghafari19a,liu2019optimal,Thompson2017,Elliott2018Continous}, which left an open question on whether quantum advantages prevail in more realistic memory-limited scenarios where there is a trade-off between memory and accuracy. Our results prove that the answer is in the affirmative, and additionally give an algorithmic means to infer such dimensionality-reduced quantum models directly from experimental data.
We implement a model found by our algorithm on the IBM  quantum computer (``ibmq athens''). Despite the noise, a statistically significant accuracy advantage remains.

\section*{Results}

\begin{figure}[thbp]
	\centering
	\includegraphics[width=0.85\linewidth]{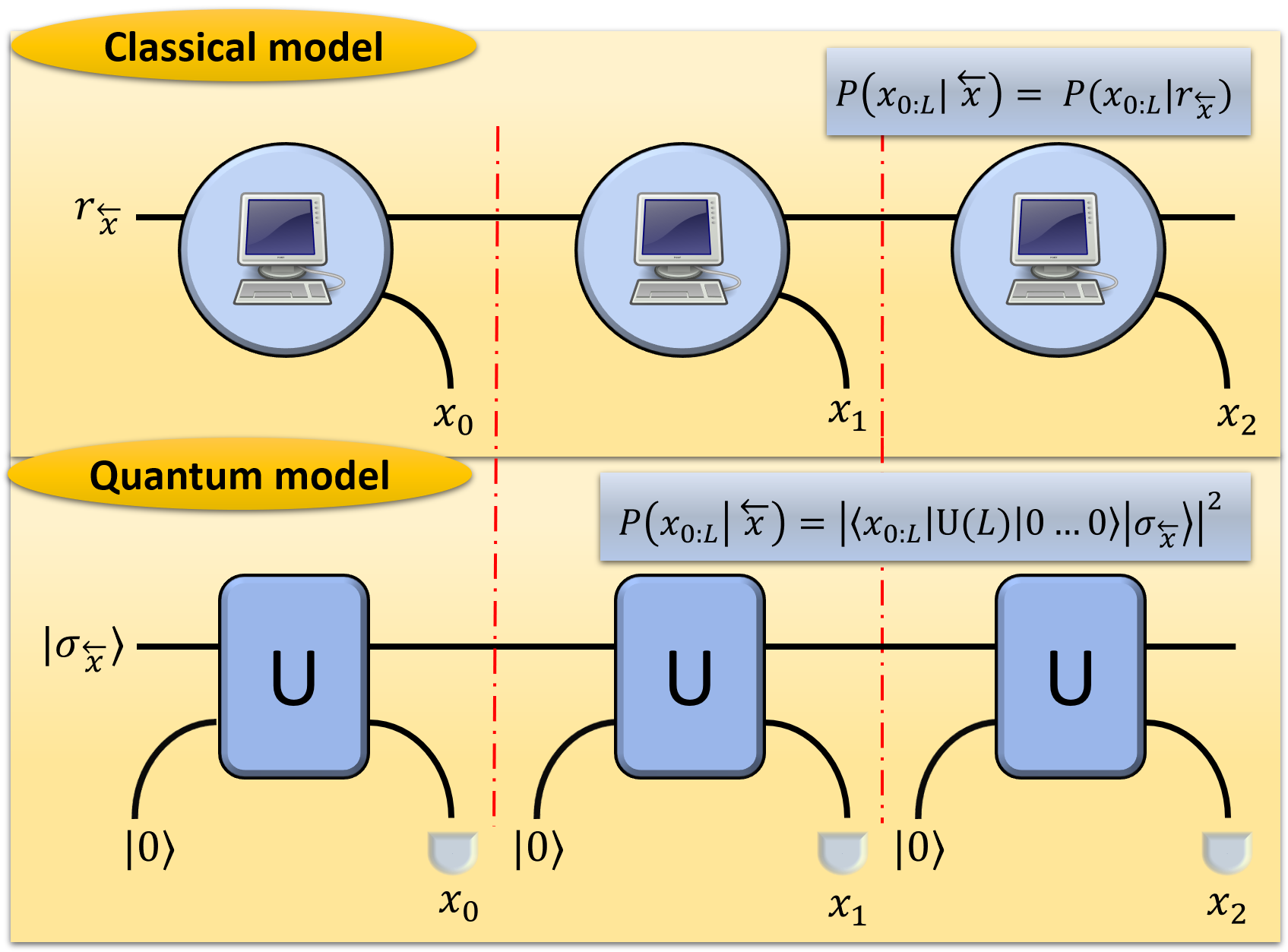}
	\caption{\caphead{Comparison of classical and quantum models} that produce statistical samples of a stochastic process's future behaviour.	A classical model stores classical states in its memory.	In quantum models, the quantum memory and the output register -- initialized in state $\ket{0}$ -- interact through the unitary transformation $U$, and a measurement of the output register in the computational basis yields $x_i$. 	This can be repeated to sample $x_{0:L}$ for as big $L$ as desired. $U(L)$ denotes the unitary coupling the memory with $L$ consecutive output registers.}
	\label{fig:UnitaryCircuit}
\end{figure}
\inlineheading{Classical models and distortion}
Consider a physical system observed at discrete times $t$ with corresponding outcomes $x_t\in\mathcal{X}$ over some finite alphabet $\mathcal{X}$. We assume that the system is stationary and stochastic, such that each $x_t$ is drawn from random variable $X_t$, all of which are governed by a time-translation invariant distribution $P(\past{X},\future{X})$. Here, $\past{X}:= \cdots X_{-2}X_{-1}$ and $\future{X}:=  X_{0}X_{1}X_{2}\cdots$ denote random variables governing stochastic outcomes in the past and future.

An (exact) {\em predictive model} is a systematic algorithm that takes each possible past $\past{x}$,
 encodes it via the deterministic map $\encod$ to some suitable state $\encod(\past{x})$ within a physical memory $M$,
 such that the same systematic action $\mathcal{P}$ on $M$ at each subsequent time-step sequentially outputs $x_0,x_1,x_2 \cdots$
 with probability $P(\future{X} = \future{x}|\past{x})$, as shown in ~\fref{fig:UnitaryCircuit}.  Knowing the state of the memory $M$ is thus as useful as $\past{x}$ for purposes of inferring future statistics. Together, these properties moreover imply that the model is \emph{unifilar} -- if a predictive model in state $r_{\past{x}} = \encod(\past{x})$ emits some output $x_0$, we know with certainty that the model will transition to state  $r_{\past{x},x_0} = \encod(\past{x}, x_0)$~\cite{Shalizi2001}. This guarantees that regardless of how many steps $L$ of the process are produced, it is always possible to work out what state we should update $M$ to, such that the model can continue the simulation process.

The classical predictive models of provably minimal memory dimension are {\em $\varepsilon$-machines}~\cite{Shalizi2001,Crutchfield1989}. $\varepsilon$-machines have been extensively studied in the context of complex systems~\cite{Haslinger2009,yang2008increasing,park2007complexity,Boyd2016identifying, Garner2017thermodynamics}. They capitalize on the fact that if two pasts, $\past{x}$ and $\past{x}'$ have coinciding future statistics, then it is unnecessary to distinguish them for purposes of future prediction. This then divides the set of all pasts into equivalence classes, according to the relation $\past{x} \sim \past{x}'$ if and only if  $P(\future{X}|\past{x}) = P(\future{X}|\past{x}^{'})$. The $\varepsilon$-machine then allocates in memory one configuration state for each equivalence class. These states are known as \emph{causal states}, denoted by $\encod(\past{x})$.

The number of causal states, $d_c$, is known as a process's \emph{topological complexity} -- capturing the minimal number of configurations a predictive model requires, whilst  exhibiting correct conditional future behaviour. For Markovian processes, for example, the map $\encod({\past{x}}) = x_{-1}$ is predictive, since knowing the very last output is sufficient for purposes of prediction without distortion. Thus for Markovian processes, $d_c$ needs never exceed the size of the alphabet $\mathcal{X}$. When a process becomes highly non-Markovian, its topological complexity can scale without bound~\cite{marzen2015informational}. The simplest example is a discrete-time clock, which `ticks' by emitting `$1$' every $T$ time-steps and `$0$' at all other times.

In such scenarios of diverging $d_c$, finite memory means a necessary {\em distortion} of predictive accuracy.  Specifically, given the constraint that our memory $M$ has a fixed number of dimensions $\hat{d}_c \leq d_c$, we may define a candidate model as one that encodes $\encod(\past{x})$ within $M$ -- such that systematic action $\mathcal{P}$ on $M$ generates a sample from $\hat{P}\left(\future{X}|\encod(\past{x})\right)$.

Our goal is then to find a model such that the deviation of $\hat{P}$ from $P$-- the distortion--is minimal. There are a number of possible ways to quantify this distortion.
Here, we make use of the normalized conditional Kullback--Leibler (KL) divergence
\begin{equation}
\label{eq:Dfutpast}
	D_L(P, \hat{P} | \past{x}) := \frac{1}{L}\mathcal{D}_{\rm KL}(P(X_{0:L}|\past{x})~||~\hat{P}(X_{0:L} | \encod(\past{x}))),
\end{equation}
where $X_{0:L} = X_0 X_1 \ldots X_{L}$ and $\mathcal{D}_{\rm KL}(P||Q) = \sum_{x} P(x)\log(P(x)/Q(x))$.
The KL divergence holds operational significance in diverse contexts including machine learning, hypothesis testing and thermodynamics~\cite{hobson1987concepts,cover2012elements}.
The {\em distortion} of a model $\hat{P}$ relative to the process $P$ is then quantified as:
\begin{equation}
\label{eq:Distortion}
	D_e(P, \hat{P}) = \lim_{L\to\infty}\sum_{\past{x}} P(\past{x}) \, D_L(P, \hat{P} | \past{x}),
\end{equation}
i.e.\ the mean KL divergence over all pasts.

\inlineheading{Quantum Models}
In conventional models, the memory $M$ is assumed to be classical. Quantum mechanics, however, enables memories configured in superposition states. In this scenario, a $d$-dimensional quantum model is one where each past $\past{x}$ is encoded into a suitable quantum state $\ket{\sigma_{\past{x}}}$ within a $d$- dimensional quantum memory $M$. Meanwhile, there must exist some quantum process whose repeated application allows sampling of the appropriate conditional futures. \fref{fig:UnitaryCircuit} illustrates the quantum circuit picture. At each time-step, $M$ is coupled via a time-step independent unitary $U$ to an output register initially in $\ket{0}$. The output register is then measured, resulting in emitted outcomes $x_1,x_2,\ldots$ governed by $\hat{P}(\future{X}|\sigma_{\past{x}})$.

There are processes where quantum models can achieve zero distortion (i.e. $\hat{P}(\future{X}|\sigma_{\past{x}}) = P(\future{X}|\past{x})$) with a lower memory dimension than classically possible~\cite{Thompson2017}.
Experiments demonstrate that stochastic processes with three causal states can be accurately modelled by encoding the past within a single qubit~\cite{partghafari2019dimensional}. Thus the quantum topological complexity $d_q$ can be strictly lower than its classical counterpart. However, systematic means to construct such models require optimisation over an exponentially growing parameter space~\cite{loomis2019strong,liu2019optimal}.

Zero-distortion is an idealized limit. In practice, we are unlikely to know the exact process $P$ we wish to model. Instead, we are given a finite sequence of observations drawn from this process. In such scenarios, statistical fluctuations imply that aiming for zero distortion is likely meaningless and often represents harmful overfitting. Typically, we start with general finite and fixed memory resources. Thus, minimizing distortion when given finite memory $\hat{d}_q < d_q$ and finite data is of both of conceptual and practical importance. In such settings, the capabilities of quantum models to minimize this distortion are not known. Indeed, the task remains under active study even in classical settings~\cite{tegmark2020pareto,marzen2016predictive,marzen2017nearly}. As such, our goal here is to find an algorithm for inferring quantum models of some given dimension $\hat{d}_q$ that attempts to minimize distortion, and use it to establish that such models exhibit less distortion than any classical counterpart.


\inlineheading{Quantum model discovery algorithm} (see Box~1). The algorithm takes two inputs: (i) a data sequence $x_{0:L}$, assumed to be drawn from some stationary stochastic process $P$ and (ii) our desired memory dimension $\hat{d}_q$.
From this data, the algorithm learns a quantum model of the process, including (i) an encoding map $\encod$ which specifies how to encode each past into a quantum memory of dimension $\hat{d}_q$, and (ii) the relevant physical process -- described by a unitary $U$ as in~\fref{fig:UnitaryCircuit} -- whose repeated application produces  entangled output states. Measurements on such output states emit outputs that approximate the process's conditional future behaviour.

\begin{figure}
\label{Box:Algorithm}
\vspace*{0.5em}
\begin{tcolorbox}[skin=enhanced]
 {\bf Box 1: Quantum model discovery algorithm. }

 \vspace*{0.5em}
 {\bf Inputs:}\\
 \vspace*{0.25em}
 \begin{tabular}{rcl}
	$\hat{d}_q$  &  --  & the desired model memory dimension. \\
	$x_{0:L}$ & -- & a length $L$ sequence from the process.
 \end{tabular}

 \vspace*{0.5em}
 {\bf Outputs:}\\
 \vspace*{0.25em}
 \begin{tabular}{rcl}
	$U$  &  --  & the unitary dynamics of the quantum model. \\
	$\encod$ & -- & the encoding map from observed histories\\&& to $\hat{d}_q$--dimensional quantum memory states.
 \end{tabular}	

 \vspace*{0.5em}
 {\bf Algorithm:}
	\vspace*{-0.5em}
	\begin{enumerate}
		\item To learn the unitary operator $U$:
		\vspace*{-0.5em}
		\begin{enumerate}
		 \item Randomly initialize parameter set $A$ corresponding up to completeness to the set of Kraus operator matrices $\{A_x := \bra{x}U\ket{0}\}$.
 		 \item Evaluate the cost function $C := -\log P(x_{0:L} ; A)$ and its gradient $\nabla C$.
		 \item Update the Kraus operators  $\{A_x \}$ using gradient-descent based method (such as Adam).
		 \item Repeat (b)-(c) until the cost function decrease is sufficiently small.
		 \item Save final $A$ as candidate parameters.
		 \item Repeat (a)-(e) as desired, to minimize the impact of initial choice of $A$. Choose the final $A$ with the lowest C.
		 \item For that $A$, recover the completeness relation for $\{A_x\}$.
		 \item Construct a unitary operator $U$ from these $\{A_x\}$.
		\end{enumerate}
		\item To compute the encoding $\encod$:
		\vspace*{-0.5em}
		\begin{enumerate}
		\item Compute the leading eigenvector $\ket{\sigma_0}$ of one of the Kraus operators ($A_0$).
		\item $\encod$ is then defined as \\$\encod\!\left(\past{x}\right) := A_{\past{x}}\ket{\sigma_0}/\lVert A_{\past{x}}\ket{\sigma_0} \rVert_2$.
		\end{enumerate}
	\end{enumerate}
\end{tcolorbox}
\end{figure}

To operate the learning algorithm, we cast the problem as minimizing some cost function over a parameter set $B$. This involves finding (a) an effective  parametrisation of all such models to optimize over and (b) a computable cost function that proxies the mean KL-divergence $D_e$.

To tackle (a), first note that our model's output behaviour is entirely defined by a family of Kraus operators $A = \{A_x\}$, where $A_{x} = \bra{x}U\ket{0}$ captures the action of the memory when it interacts with an output register that is subsequently observed to have output $x$. Observe also that $A_x$ informs the {\em encoding map} from pasts onto quantum memory states. To see this, note that a memory initially in $\rho$ is mapped to $A_x\rho A_x^\dag$ after observation of $x$. Thus, observation of a past $\past{x}$ implies applying a sequence of Kraus operators $A_{\past{x}} := A_{x_0}A_{x_{-1}}\ldots$. We can therefore encode the $\past{x}$ as
\begin{equation}
	\encod\!\left(\past{x}\right) := \ket{\sigma_{\past{x}}} := c A_{\past{x}}\ket{\sigma_0},
\end{equation}
where $c$ is the constant of normalization. Here, the initial quantum state $\ket{\sigma_0}$ can be the leading eigenvector of any Kraus operator $A_x$ (see \sm~\ref{sem:QEM}).
This na\"ively suggests optimizing over $A$, but this is generally cumbersome as $A$ is constrained by the completeness relation $\sum_x (A_x)^\dagger A_x = \mathbb{I}$.
In the \mmsec{}, we instead devise a way to optimize over a set of unconstrained $\hat{d}_q\times \hat{d}_q$ complex matrices $B = \{B^x\}$, whose value enables the deduction of $A$ via tensor network techniques~\cite{Vidal2007}.

To tackle (b), the obvious candidate cost function is $D_e$ itself. However, computation of this quantity is costly -- especially as we are given only a sample $x_{0:L}$ of the desired process $P$. We thus employ a more efficiently computable proxy,
\begin{equation}
\label{eq:CostFunction}
C(B) = -\log_2 P_B(x_{0:L}),
\end{equation}
representing the negative log-likelihood of producing $x_{0:L}$ with a model parameterized by $B$.
In the limit of $L\to \infty$, a model minimising the negative log-likelihood of \eqref{eq:CostFunction}  exactly replicates the desired stochastic behavior~\cite{petrie1969probabilistic,Juang1985}.
With both the cost function and parameter spaces established, the use of optimisation algorithms such as Adam is enabled (see the \mmsec{}). In the \mmsec{}, we also show that $C$ can be computed directly from $B$ without first deducing $A$, boosting optimisation efficiency. Once optimisation concludes, the corresponding $U$ for generating predictions is retrieved from $B$ (see the \mmsec{}).
\begin{figure}[htbp]
	\centering
	\subfloat[][]{\includegraphics[width=0.7\linewidth]{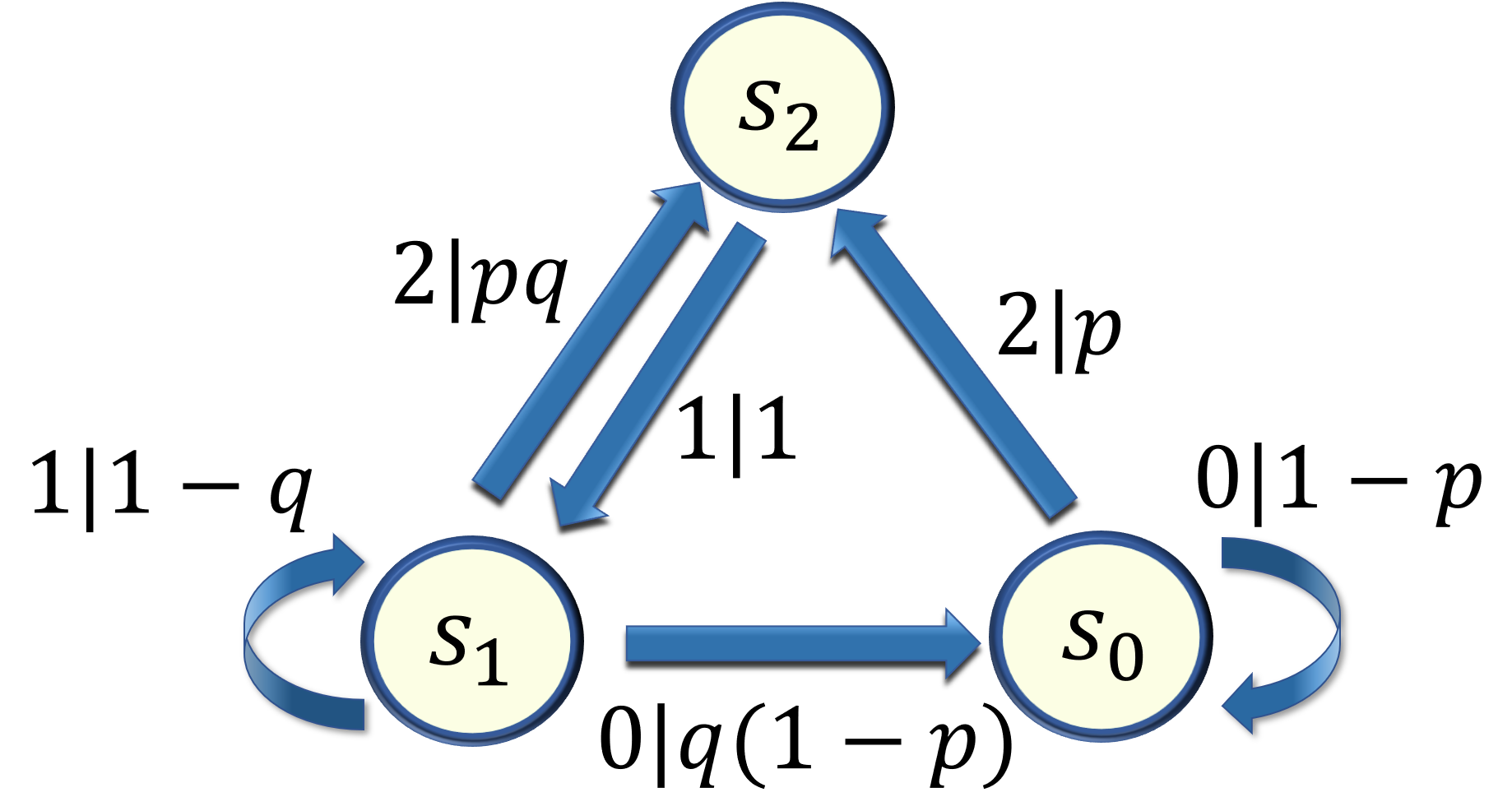}\label{Fig:asymm}}\\
	\subfloat[][]{\includegraphics[width=0.95 \linewidth]{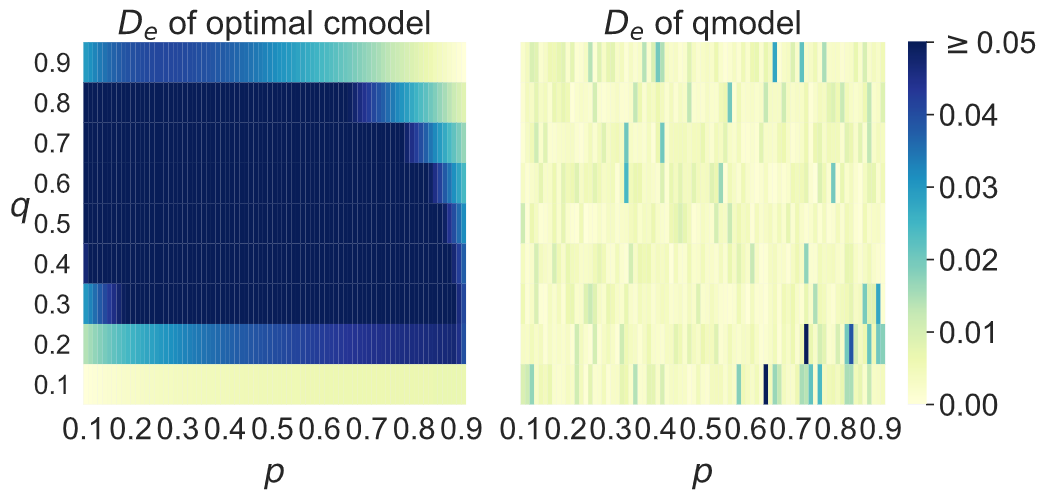}\label{fig:ex1error}}
	\caption{\caphead{Example: The asymmetric process.} \textbf{(a)} $\varepsilon$-machine for the asymmetric process, parametrised by $p,q\in(0,1)$. The nodes are the causal states $s_0$, $s_1$ and $s_2$. The arrows indicate transitions between states, where each label $x|P(x)$ indicates output $x$ is produced with probability~$P(x)$.
	\textbf{(b)} Classical vs quantum distortion.
	Distortion in the $2$-dimensional models for the asymmetric process across a range of process parameters $p$, $q$. The left-hand-side shows the optimal approximate classical model's distortion; the right-hand-side shows the distortion of the most accurate quantum model found by our algorithm.}
\end{figure}

\inlineheading{Provable Quantum Advantage}
We are now in a position to answer the open question: \emph{can memory-constrained quantum models exhibit improved accuracy}? Answering this question in full generality is non-trivial as almost all existing analytical results for predictive modelling apply only to zero distortion. While sophisticated numerical methods for inferring classical models exist~\cite{kelly2012new,shalizi2014blind}, they do not tell us whether the resulting models are provably optimal in minimizing KL-divergence.
In the \mmsec{} we establish provable upper bounds on the quality of classical methods of fixed dimension. We will show that our inferred quantum models can exceed these optimal bounds.

The first example is the family of simple asymmetric processes.
For any process in this family, exact modelling requires three causal states classically (see its optimal predictor in \fref{Fig:asymm}),
 but can be achieved quantumly on a qubit~\cite{Ghafari19a, Thompson2017}. Thus, if we set the desired dimensions to $2$, our quantum model could theoretically achieve a KL-divergence of $0$, while some distortion for classical models is inevitable. Indeed, in the methods, we evaluate the minimal error achieved by $2$-dimensional classical models for process parameters; and compare them with errors in our trained quantum models (see  \fref{fig:ex1error}). We observed that errors in our trained quantum models are negligible in comparison to the best possible classical counterparts.

Our second example involves a family of generalised $3$-state quasi-cycles (see \fref{Fig:quasicycle}). Quasi-cycles are for example of interest in thermodynamics---certain quasi-cycles are valid thermal operations in that they preserve the Gibb's thermal state, but interestingly without respecting detailed balance~\cite{HorodeckiOppenheim2013}. They constitute a family of processes where neither quantum nor classical models can generally achieve perfect accuracy when limited to $2$ dimensional memories~\cite{liu2019optimal}. The resulting error of our inferred quantum model (see~\sm~\ref{sm:learned}) is compared with that of the classical limit in \fref{fig:ex2error}. Clearly, a broad area of process parameters $(p,\delta)$ can be accurately approximated by a compressed quantum model, whose error is up to an order of magnitude better than the classical models. The algorithm also achieves near perfect accuracy over a one-dimensional subspace of the quasi-cycles -- retrieving a class of quantum models that was only recently discovered using exhaustive numerical search~\cite{liu2019optimal}).
\begin{figure}[htbp]
	\centering
	\subfloat[][]{\includegraphics[width=0.6\linewidth]{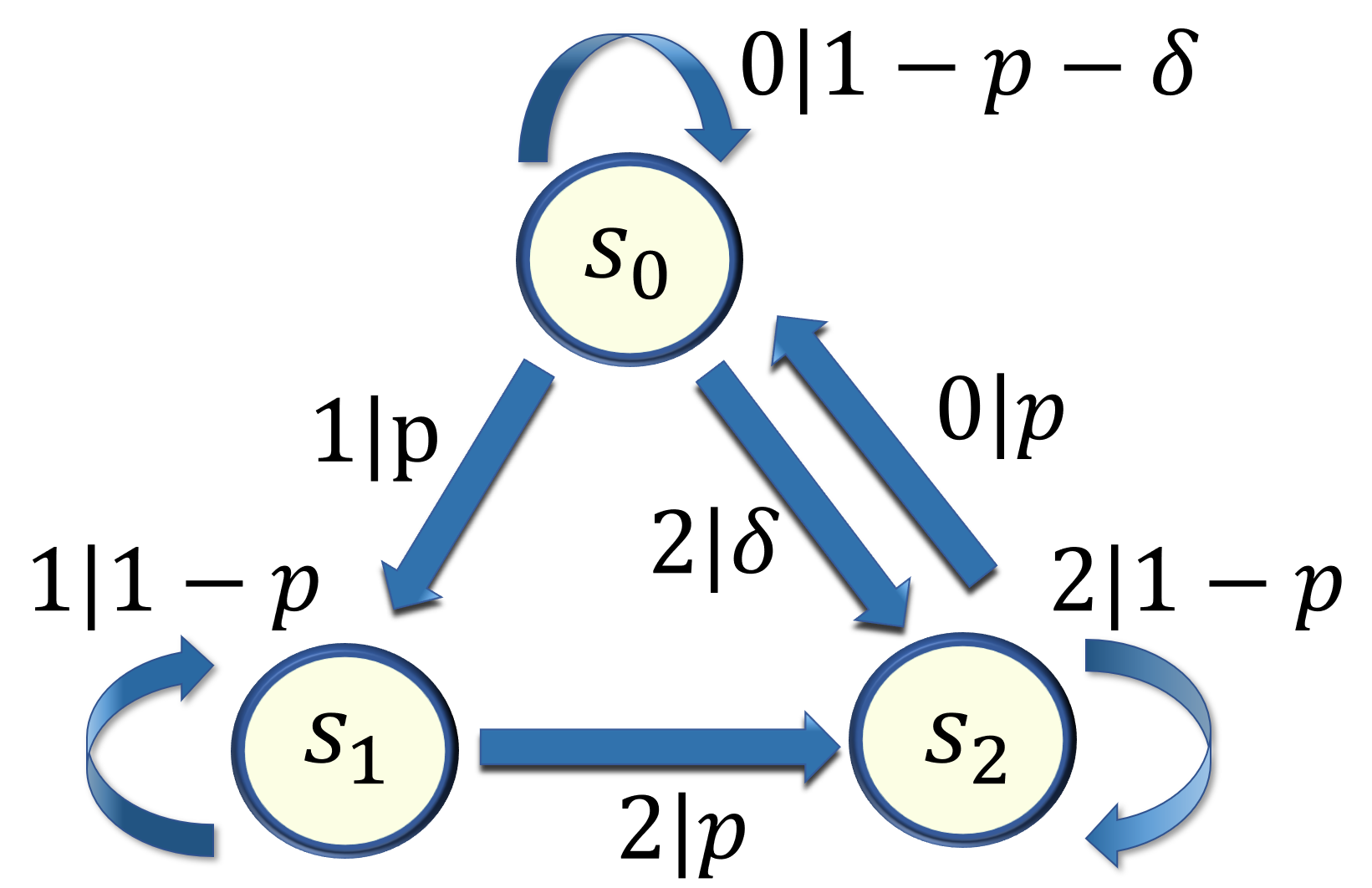}\label{Fig:quasicycle}}\\
	\subfloat[][]{\includegraphics[width=0.95\linewidth]{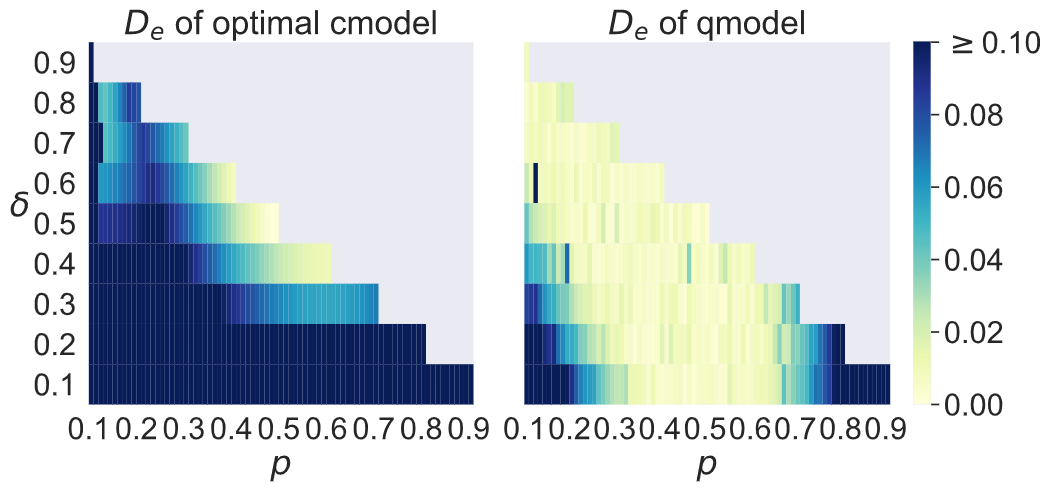}\label{fig:ex2error}}
	\caption{\caphead{Example: Quasi-cycle process.} \textbf{(a)}
	$\varepsilon$-machine for the $3$-state quasicycle, parametrised by $p\in(0,1), \delta\in(0,p]$. The nodes are the causal states $s_0$, $s_1$ and $s_2$.
	The arrows indicate transitions between states, where each label $x|P(x)$ indicates output $x$ is produced with probability~$P(x)$. \textbf{(b)}  Classical vs quantum error comparison.	Error in the $2$-dimensional models for the quasi-cycle, across a range of process parameters $p$, $\delta$. The left-hand-side shows the optimal approximate classical model's error; the right-hand-side shows the error of the most accurate quantum model found by our algorithm.}
\end{figure}

We now apply our inference algorithm to an iconic non-Markovian process - the family of discrete-time renewal processes. Renewal processes represent an interpolation between Poissonian processes and ticking clocks, characterising many natural processes including models of lifetimes~\cite{doob1948renewal} and neural spike trains~\cite{crutchfield2015time}. At each time-step, the process may emit one of two outputs: $0$, representing no tick, and $1$ representing a tick, and their probability of ticking at each time-step depends only on the number of time-steps since their previous tick~\cite{Elliott2018Continous,Marzen2017,elliott2019extreme}. In the event that this probability is constant, we have a Poissonian process whose topological complexity is zero; as its action at each time is independent of past events (and thus considered memory-less).

Here we consider a class of period-$N$ uniform renewal processes $\mathcal{U}_N$. These represent processes where the number of $0$s between each two neighboring ticks is uniformly distributed between $0$ and $N-1$. As $N$ scales, the process becomes progressively more non-Markovian. Indeed, the classical topological complexity for $\mathcal{U}_N$ is $N$, the intuition being that a maximally faithful model would need to store the number of time-steps since the last tick to decide with what probability it should tick at each time-step (see corresponding $\varepsilon$-machine in \fref{fig:UniformRenewalProcess}). Reducing the number of distinct memory configurations to $\hat{d} < N$ would force the model to lose track of this information -- resulting in unavoidable distortion.

To study the magnitude of this distortion, we applied our algorithm to infer memory-constrained models of $\mathcal{U}_N$ for cases where $N$ ranged between 2 and 5 inclusive, for a variety of memory limits $\hat{d}_q < N$.
We then compare the error with a lower bound of optimal classical models with similar memory constraints (see the \mmsec{}). The resulting distortions are illustrated in the upper graph of \fref{fig:ex3error}. Even though the learned quantum models have the additional handicap of being trained on finite-length data, we see that quantum models outperform optimal classical counterparts away from the zero-error regime.  Models with larger $\hat{d}_q$ or $\hat{d}_c$ maintain smaller error but at the cost of a larger memory, and requiring a larger dataset to train. In the case of $N=4$, there is a significant gap between the errors of $\hat{d}_q=3,4$ and the error of $\hat{d}_q=2$. Thus, the algorithm could here be used to identify a sweet spot for the choice of~$\hat{d}_q$: setting $\hat{d}_q=3$ gives a good accuracy for a small dimension.
To examine the resilience of the quantum advantage on present-day noisy devices, we implement the quantum model on one of IBM's quantum systems (``ibmq athens") for the case of $\hat{d}_q=2, N =3$.
Such quantum models still exhibit statistically significant accuracy advantage over the classical bound, as is shown in~\fref{fig:IBMdist}.
\begin{figure}[htbp]
	\centering
	\includegraphics[width=\linewidth]{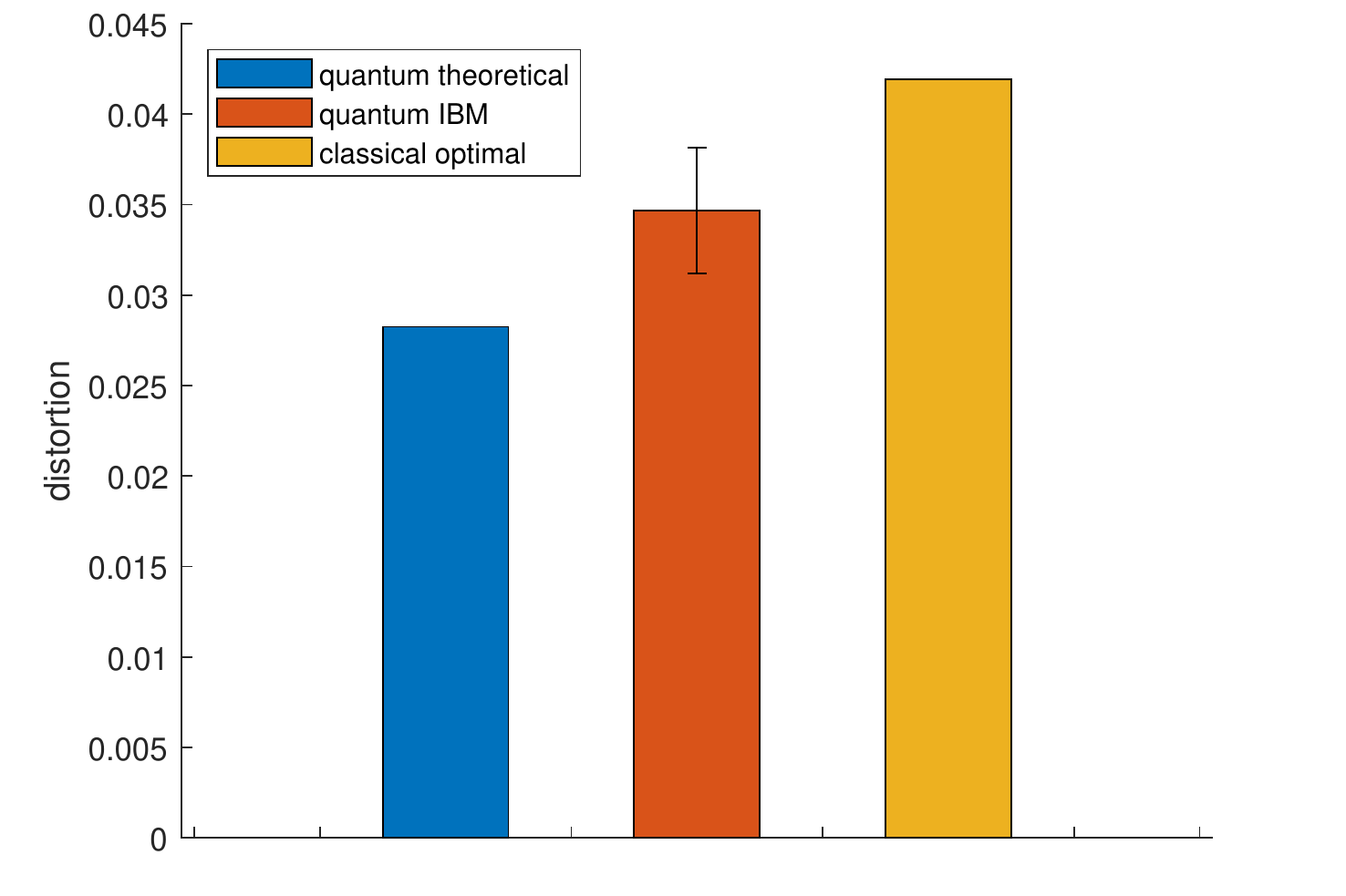}
	\caption{\caphead{Accuracy advantage on present-day quantum hardware.} Despite the experimental noise, the quantum model (orange) retains an accuracy advantage of $2.12$ standard deviations over the classical bound (yellow) for the given dimension.
	For comparison, the blue bar shows the expected distortion of the learned quantum model without experimental noise.
	See \mmsec{} for details.
	}
	\label{fig:IBMdist}
\end{figure}

\begin{figure}[htbp]
	\centering
	\subfloat[][]{\includegraphics[width=0.85\linewidth]{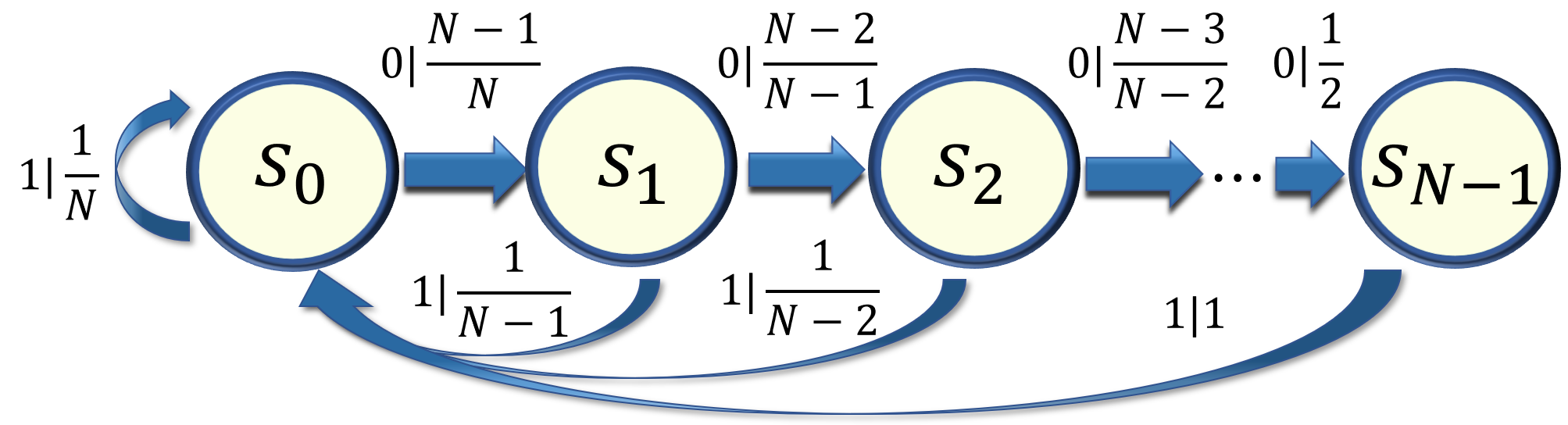}\label{fig:UniformRenewalProcess}}\\
	\subfloat[][]{\includegraphics[width=0.95\linewidth]{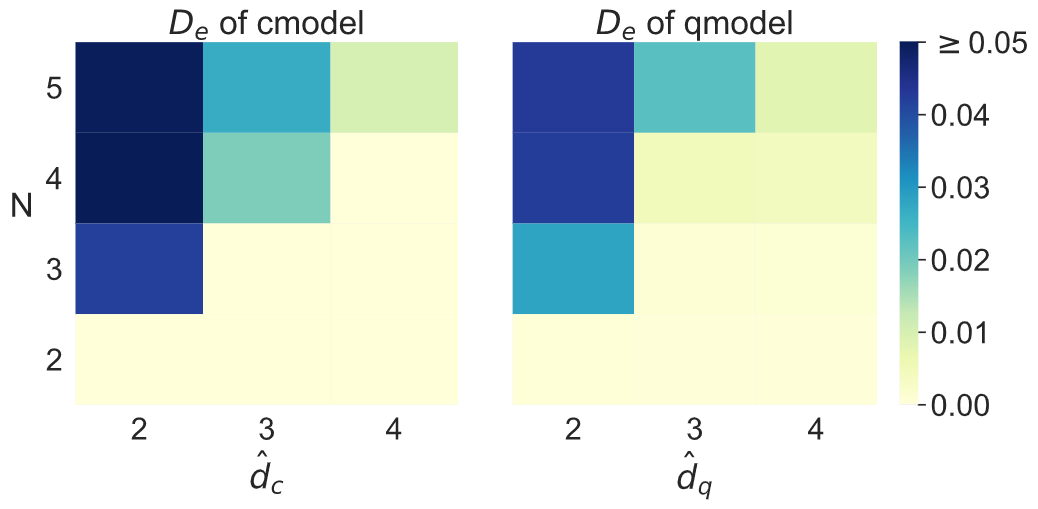}\label{fig:ex3error}}
	\caption{\caphead{Example: The uniform renewal process.} \textbf{(a)} $\varepsilon$-machine for an $N$-step discretization of the uniform renewal process (in which events are spaced by a uniform random interval).
The nodes are the causal states $s_0$, \ldots $s_{N-1}$.
The arrows indicate transitions between states, where each label $x|P(x)$ indicates that output $x$ is produced with probability~$P(x)$.
 \textbf{(b)} Comparison of classical and quantum errors. The left-hand-side shows the lower bound on the optimal approximate classical model's error; the right-hand-side shows the error of the most accurate quantum model found by our algorithm. The accuracy of the resulting quantum model is restricted by finite data whereas the classical models are assumed with full knowledge of the stochastic processes. }
\end{figure}
\section*{Discussion}
Inferring models to predict future behaviour based on past observations is a key task permeating quantitative science. Ideally, when two candidate models require storing the same amount of data from the past, the one that offers predictions with greater statistical fidelity is preferred. 
Here we demonstrate that quantum models have a provable advantage, and present an algorithm to infer such advantaged models directly from data. 
When given training data in terms of a data sequence and a memory of limited dimension, our algorithm is able to encode relevant past information into the memory and discover a quantum process that can sequentially output predictions. 
The accuracy of these predictions -- as measured by their statistical fidelity to the true process generating the training data -- can greatly exceed that of provably optimal classical counterparts. 
The accuracy advantage is realized on an IBM quantum computer, illustrating potential near-term applicability. 
Our work substantially expands upon prior work in memory-enhanced quantum models, which focus on the memory advantages in the exact modelling of a given stochastic process~\cite{Ghafari19a, elliott2019extreme, Thompson2017}. 
The results here illustrate that when the memory dimension is fixed, quantum models can exhibit a significant accuracy advantage. 
This establishes the advantage of quantum models in real-world settings, where there is a trade-off between model complexity and accuracy. 
In this context, our results provide a quantum counterpart to classical tools for structural inference~\cite{shalizi2002algorithm, hinton1999unsupervised, shalizi2014blind} and dimensional reduction in stochastic modelling~\cite{derMaaten09}.

The resulting circuits for executing such models simply involve sequential application of the same unitary process $U$ at each time-step. 
The dimensionality of this unitary is then bounded by the memory dimension of the quantum model. 
As such, a fixed memory dimension guarantees that the simulation of a process for $T$ time-steps has gate complexity that scales linearly with $T$. 
Moreover, unitarity means such quantum models can generate a quantum superposition of all possible futures if the outputs remain unmeasured. 
Thus the quantum model produces a key resource for various quantum data analytics algorithms, such as amplitude estimation, Grover's search, value at risk, and importance sampling~\cite{grover1997quantum,brassard2002quantum,woerner2019quantum,blank2020quantum}. 
The capacity for our models to generate data by sequential application of the same unitary makes them particularly amenable to certain contemporary hardware architectures (e.g.,\ loop-based photonics where only a single optical circuit implementing $U$ needs to be engineered~\cite{schreiber2010photons}). 
Our techniques thus pave the way for quantum-enhanced tools to analyse time-series data from diverse fields such as financial forecasting~\cite{park2007complexity,yang2008increasing}, neuroscience~\cite{Haslinger2009,munoz2020general}, seismology~\cite{Bertello08,Chelidze07}, and natural language processing~\cite{Rabiner89}.

There are a number of promising avenues for future research. 
One caveat of our present learning algorithms is that they remain classical. 
They can thus suffer the curse of dimensionality, and perform best when studying processes that are close to those with low quantum topological complexity. 
While some processes within this class have an arbitrary number of causal states and are already sufficient to demonstrate scaling advantage~\cite{elliott2020extreme}, there is a clear motivation to develop alternative algorithms to learn quantum models of higher memory dimension. 
One possibility is to adopt variational circuit techniques, where candidate quantum models are parameterized by repeated application of some low-depth circuit. 
One can then leverage potential interferometric means to efficiently ascertain distortion in candidate models~\cite{ghafari2019interfering}. 
There are also a number of connections to neighbouring fields.  
One is the dimensional advantages of quantum clocks~\cite{yang2019accuracy,woods2018quantum}, as a clock can be considered a continuum limit of a particular discrete renewal process. 
Meanwhile prediction also plays a key role in information engines that extract free energy from structured patterns~\cite{StillSBC12,Garner2017thermodynamics,Garner19,boyd2018thermodynamics,boyd2020thermodynamic}, where the KL divergence between expectation and reality induces unavoidable energetic waste~\cite{kolchinsky2017dependence,riechers2021initial}. 
Thus our work may herald more effective quantum engines in memory-limited regimes, even when extracting free energy from purely classical correlations.

\section*{\mmsec{}}

\inlineheading{Model Parametrization}
The candidate quantum models of dimension $\hat{d}_q$ for a process with output alphabet $\mathcal{X}$ can be parametrised by a complete set of $|\mathcal{X}|$ Kraus operators $A = \{A_x\}_{x \in \mathcal{X}}$, where $A_x$ denotes the Kraus operator describing how the model updates upon emission of output $x$.
However, this parametrisation is non-ideal for optimisation due to the completeness constraint.

Here, we demonstrate an alternative parametrisation using $B = \{B_x\}_{x \in \mathcal{X}}$, where each $B_x$ is a general $\hat{d}_q \times \hat{d}_q$ complex matrix. Notably, we show that given any such $B$, it is possible to recover a corresponding set $A$ via the following process: First, consider the linear map $\mathcal{E}^{B^*}\!\left(\cdot\right) := \sum_x B_x^\dagger \cdot B_x$. By construction, this map is completely positive, so its leading eigenvalue $\lambda$ is real and positive, and the associated eigenmatrix $V$ is positive semi-definite. Thus, $V$ admits a decomposition $V= W^\dagger W$ (where $W$ is invertible). We can then set $A_x$ such that
\begin{equation}
	A_x := W B_x {W}^{-1}/\sqrt{\lambda}. \label{eq:CompleteKraus}
\end{equation}
Every set $\{A_x\}$ formed in this way satisfies the completeness relation (see \sm~\ref{sm:CompProof}). Thus, each $B_x$ can be used directly to construct a valid set of corresponding Kraus operators $A_x$. This further results in valid unitary operator $U$ (see \sm~\ref{sm:BuildingUnitary}).
Likewise, via \eqref{eq:CompleteKraus} one can also infer the encoding map $\mathcal{E}$ from $B$ (see \sm~\ref{sem:QEM}).

\inlineheading{Computing the Cost Function}
To see how to compute the cost function\\ $C(B) = -\log_2 P_B(x_{0:L})$ directly from $B$, first consider a model with Kraus operators $A = \{A_x\}$ initialized in state $\rho_0$ at $t = 0$. The probability it outputs $x$ at $t = 1$ is then given by $P(x|\rho_0)|_A = \mathrm{Tr}(A_x \rho_0 (A_{x})^\dagger)$, whereby the state transitions to $\rho_1 = A_x \rho_0 (A_{x})^\dag$. As such, repeated iterations of the model will output $x_{0:L}$ with probability
\begin{equation}
	P(x_{0:L}|\rho_0)|_A = \mathrm{Tr}(A_{x_{0:L}}\rho_0A_{x_{0:L}}^\dagger),
	\label{eq:SeqProb}
\end{equation}
where $A_{x_{0:L}} = A_{x_{L-1}}\cdots A_{x_0}$. In addition, if $\rho_0 $ is the stationary memory state averaged over all histories of the process, we obtain the probability of output sequence $x_{0:L}$ when averaged over all pasts. We can then write this likelihood directly in terms of $B^x$ by applying~\eqref{eq:CompleteKraus},
\begin{equation}
	P_B(x_{0:L}) = \mathrm{Tr}(B_{x_{0:L}}\tilde{\rho}{B_{x_{0:L}}}^\dagger V)/\lambda^L,
	\label{Eq: Channel Distribution}
\end{equation}
where $\tilde{\rho}=W^{-1}\bar{\rho}{W^{-1}}^\dagger$ is the leading eigenmatrix of $\mathcal{E}^B$.

\inlineheading{Training process}
Here we specify the details of the training process to minimise the cost function $C(B)= -\log_2 P_B(x_{0:L})$. At this stage, any number of different optimisation techniques could be employed. Specifically, we used Adam optimisation~\cite{kingma2014adam}.

The Adam method is a sophisticated form of gradient descent. Recall that standard gradient descent involves computing $\nabla C$, the partial derivatives with respect to each degree of freedom in $B$ (referred to as a free parameter), and updating $B$ according to $B' = B - \eta \nabla C$, where $\eta > 0 $ is an adjustable learning rate. The Adam method fine-tunes this by using individual adaptive learning rates for each free parameter, computed using estimates of the first and second moments of the gradients $\nabla C$~\cite{kingma2014adam}. Due to the capacity to relate the cost function directly to $B$ (see \eqref{Eq: Channel Distribution}), this can be done without ever needing to distill the quantum circuit itself.

In each run of the training process, we begin with a random $B$. The update process is then repeated until the decrease of updated cost function is below a chosen threshold. Like nearly all gradient descent variants, Adam can converge to local minima. However, in our experiments we found roughly three quarters of all runs give the same minimal value for $C$, and no further gains were made after taking the optimal of three runs (see \sm~\ref{sm:hyperparam} and \ref{sm:learned}).
\begin{figure}[htbp]
\vspace*{0.5em}
\begin{tcolorbox}[skin=enhanced]
 {\bf Box 2: Finding a lower bound on distortion\\of unifilar models}

 \vspace*{0.5em}
 {\bf Inputs:}\\
 \vspace*{0.25em}
 \begin{tabular}{rcl}
	$P$  &  --  & the process $P$ with Markov order $\mr$ \\
	$\hat{d}_c$ & -- & the dimension of the classical model.
 \end{tabular}

 \vspace*{0.5em}
 {\bf Outputs:}\\
 \vspace*{0.25em}
 \begin{tabular}{rcl}
	A lower bound on distortion $D_e$ of all predictive models.
 \end{tabular}	

 \vspace*{0.5em}
 {\bf Algorithm:}
	\vspace*{-0.5em}
	\begin{enumerate}[ label = (\alph*)]
		\item Pick one of the encoding $\encod$ merging causal states of process $P$.
		\item Optimize the future statistics $\hat{P}(X_{0:\mr}|\encod(\past{x}))$ to minimize
		$$D_{\mr}(P,\hat{P}) = \sum_{\past{x}} P(\past{x}) D_{\mr}(P,\hat{P}|\past{x}).$$
		\item Pick a new encoding $\encod$ that merges causal states and repeat (a)-(c) until having gone through all encoding that merge causal states.
		\item Return the minimal $D_{\mr}(P,\hat{P})$ among all encodings.
	\end{enumerate}
\end{tcolorbox}
\end{figure}
The result of this algorithm for the examples discussed in this article is given in~\sm~\ref{sm:boundseg}.

\inlineheading{Lower bound on distortion of classical models}
Here we outline the details of evaluating a lower bound on distortion of classical unifilar models with finite dimension $\hat{d}_c$.
We prove that a bound on the minimum distortion of approximate unifilar models can be obtained by merging causal states, and then optimizing over the statistics of the next $\mr$ outputs for a process with Markov order $\mr$ (see \sm~\ref{sm:bounds}), as summarized in Box 2.
These analytical results reduce the search for bounds on the optimal classical model from an infinite--dimensional problem to a finite--dimensional search, which in general can be exhaustively completed numerically.
Furthermore, the bound becomes tight when $\mr=1$ (Theorem~\ref{thm:MarkovTight} of \sm~\ref{sm:bounds}).

\inlineheading{Implementation of quantum models for the discrete renewal process on IBM’s devices}
To investigate the real-world performance of the models found by the algorithm,
we implement one of the discovered quantum models for the discrete renewal process on IBM's superconducting cloud quantum computer. We consider the case of $\hat{d}_q =2, N=3$ and decompose the quantum model into a quantum circuit consisting of single qubit gates and CNOT gates, as shown in~\fref{fig:IBMCircuit}.
The IBM device used was ``ibmq athens". We ran the circuit $40,000$ times for each input quantum state and obtained the output probabilities by measuring the output register (see \sm{}~\ref{app:IBM}).

\begin{figure}[htbp]
	\centering
	\includegraphics[width=0.95\linewidth]{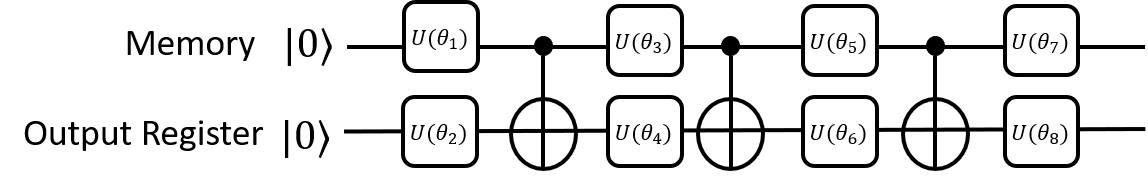}
	\caption{\caphead{Circuit diagram for a single step output.} The upper line represents the memory qubit while the lower line represents the output register. The circuit involves $8$ single qubit gates ($U(\theta_i)$) and $3$ C-NOT gates. Each single qubit gate is specified by three real parameters. $U(\theta_1)$ encodes the input memory state. }
	\label{fig:IBMCircuit}
\end{figure}

\fref{fig:IBMdist} compares the distortion realized on the IBM device with the optimal classical bound for $\hat{d}_c = 2$ and the ideal realization of the learned quantum model.
The 40,000 runs were divided into 50 batches, and the distortion was calculated for each batch. The error bar depicts the standard deviation in distortion of those 50 batches. Our quantum model maintains a statistically significant accuracy advantage even in a noisy environment: there is a $2.12$ standard deviations gap between the experimental quantum distortion and our lower bound on the best classical distortion.

\section*{Acknowledgements}
We are grateful for discussions with Thomas Elliott, Jim Crutchfield and Kelvin Onggadinata. This research is supported by the National Research Foundation (NRF), Singapore, under its NRFF Fellow programme (Award No.\ NRF-NRFF2016-02), the Singapore Ministry of Education Tier 1 Grant No.\ 2018-T1-002-043 (RG162/19) and Grant No.\ FQXI R-710-000-146-720 (Are quantum agents more energetically efficient at making predictions?) from the Foundational Questions Institute and Fetzer Franklin Fund (a donor--advised fund of Silicon Valley Community Foundation), the Quantum Engineering Program QEP-SP3. N.T. acknowledges support by the Griffith University Postdoctoral Fellowship Scheme.


\bibliographystyle{unsrt}
\bibliography{References}
\balancecolsandclearpage


\appendix

\section{Proof of Completeness}
\label{sm:CompProof}
Here we prove that for each model parametrised by $B = \{B_x\}_x$, the set $A = \{A_x\}_x$ is complete. This follows from
\begin{align}
\sum_x A_x^\dagger A_x & = \sum_x \frac{1}{\lambda}  ({W}^{-1})^{\dagger}{B_x}^\dagger W^\dag W B_x W^{-1} \nonumber \\
 & = \frac{1}{\lambda}  ({W}^\dagger)^{-1} \left( \sum_x {B_x}^\dagger W^\dagger W B_x \right) W^{-1} \nonumber \\
 & = ({W}^{\dagger})^{-1} W^\dagger W W^{-1} = \mathbb{I},
\end{align}
where we use $({W}^{\dagger})^{-1} = ({W}^{-1})^{\dagger}$.

\section{Building Unitary Circuits}
\label{sm:BuildingUnitary}
Once we have the Kraus operators $A_x$ that describe a model, standard techniques enable construction of a full unitary circuit. Specifically, we need to find a $|\mathcal{X}|\hat{d}_q \times |\mathcal{X}|\hat{d}_q$ unitary operator $U$ such that
\begin{equation}
	\bra{x}U\ket{0} = A_x.
\end{equation}
As a result, some of the elements of the unitary operators are predefined, i.e.,
\begin{equation}
	\bra{j}\bra{x}U\ket{k}\ket{0} = A_{x,jk}.
\end{equation}
These predefined elements forms $\hat{d}_q$ columns of the unitary operators, i.e., $\ket{\phi_{k0}}:= U\ket{k}\ket{0}$.
Such columns $\ket{\phi_{k0}}$ are mutually orthogonal quantum states, due to the completeness relation of $A_x$. Since an operator $U$ is unitary if and only if its columns form an orthogonal basis, the remaining task is to find the complementary quantum states $\ket{\phi_{kx}}$ so that they form orthogonal states.
This can be done by the Gram-Schmidt process~\cite{horn2012matrix}.

\section{Quantum encoding maps}
\label{sem:QEM}
The encoding map $\encod$ should map the past $\past{x}$ onto a pure quantum state $\ket{\sigma_{\past{x}}}$, which is used as the initial configuration of the quantum model.
Assume first that the process has finite Markov order $\mr$ such that only the latest $\mr$ steps of a given past $\past{x}$ affect future statistics, i.e., $P(\future{X}|\past{x}) \equiv P(\future{X}|x_{-\mr:0})$.
Writing $0$ as the element of $\mathcal{X}$ that also coincides with the initial state of the output register, for the purpose of the model, any given past $\past{x}$ is equivalent to the concatenation $0^{\infty}, x_{-\mr:0}$, where $0^{\infty} := \cdots 0,0,0$.
That is,
\begin{equation}
	P(\future{x}|\past{x})=P(\future{x}| 0^{\infty}, x_{-\mr:0} ).
\end{equation}
Recall that an output of $0^{\infty}$ corresponds to applying Kraus operator $A_0$ infinitely many times on the initial quantum state.
As a result, any initial quantum state converges to its leading eigenvector $\ket{\sigma_0}$ of $A_0$ after applying $A_0$ an infinite number of times.
The quantum state transitions to $\ket{\sigma_{x_{-\mr:0}}} :=A_{x_{-\mr:0}}\ket{\sigma_0}$ after outputting $x_{-\mr:0}$.
The choice of output symbol here was arbitrary, as such, the argument can be repeated for any output $x$ in place of $0$.

When the process has infinite Markov order, perfect initialization of the machine would require an observation of the entire process history. However, one can approximate by taking a long but finite string, which will have exponentially diminishing error with the length of the string~\cite{sarah2017nearly}. As such, the above choice of $\ket{\sigma_0}$ remains valid.

\section{Hyperparameters}
\label{sm:hyperparam}
The length of the sample in the first two examples was $1000$, while that of the third example was $5000$.
The third example -- the renewal process -- has more causal states, and training quantum models with higher $\hat{d}_q$ also requires more data.
For all of the three examples, the learning rate $\eta$ is $0.1$ and the optimizer is Adam~\cite{kingma2014adam}.
We repeat the training three times and choose the quantum model of minimal cost.

The evaluation of the error $D_e(P,\hat{P})$ relies on computing the KL-divergence \\$\mathcal{D}_{\rm KL}(P(X_{0:L}|\past{x})||\hat{P}(X_{0:L}|\epsilon(\past{x}))$.
Here, the length of the past $\past{x}$ is $5$ and the length of the future $L$ is $1$ (see~\sm~G).
This amounts to assuming that the Markov order of the process is less than or equal to 5, which is true by construction for our examples. The update process is then repeated until $\Delta C$ is less than $0.1$.

\section{Learned Quantum Models}
\label{sm:learned}
Here we present details of the quantum models produced by our model inference algorithm. We show the Kraus operators together with the corresponding encoding quantum states. As $\hat{d}_q$ is set to $2$, each encoding quantum state is represented by a point in a Bloch sphere.

\inlinesubheading{Case 1: the asymmetric process}
For a $\hat{d}_q=2$ model of the asymmetric process (recall \fref{Fig:asymm}) with $p=0.3, q = 0.8$, we found the Kraus operators:
\begin{equation}
	\begin{split}
	&A_0 = \begin{bmatrix}
		0.676 &  0.317\\
		0.316 &  0.150
	\end{bmatrix},\\
	&A_1 = \begin{bmatrix}
		-0.264 &  0.534\\
		-0.336 &  0.729
	\end{bmatrix},\\
	&A_2 = \begin{bmatrix}
		-0.241 & -0.095\\
		0.449 &  0.227
	\end{bmatrix}.\\
	\end{split}
\end{equation}
For the asymmetric processes, real amplitudes are sufficient to train a quantum model with $\hat{d}_q = 2$ to good accuracy~\cite{Thompson2017}.

The quantum states $\ket{\sigma_{\past{x}}}$ associated with encodings of pasts of length $5$ are plotted in~\fref{fig:ex1states}.
The more likely quantum states lie in three clusters.
This coincides with our expectation from the classical $\varepsilon$-machine, which has $3$ causal states-- but here, this has been discovered directly from the sample during training. Moreover, the location of these states aligns with our expectation from the theoretical optimal quantum states ascertained in \cite{Thompson2017}:
 two clusters (pink and blue) approximately correspond to orthogonal quantum states, while the third (green) lies biased between the two.

\begin{figure}[htb]
	\centering
	\includegraphics[width=0.95\linewidth]{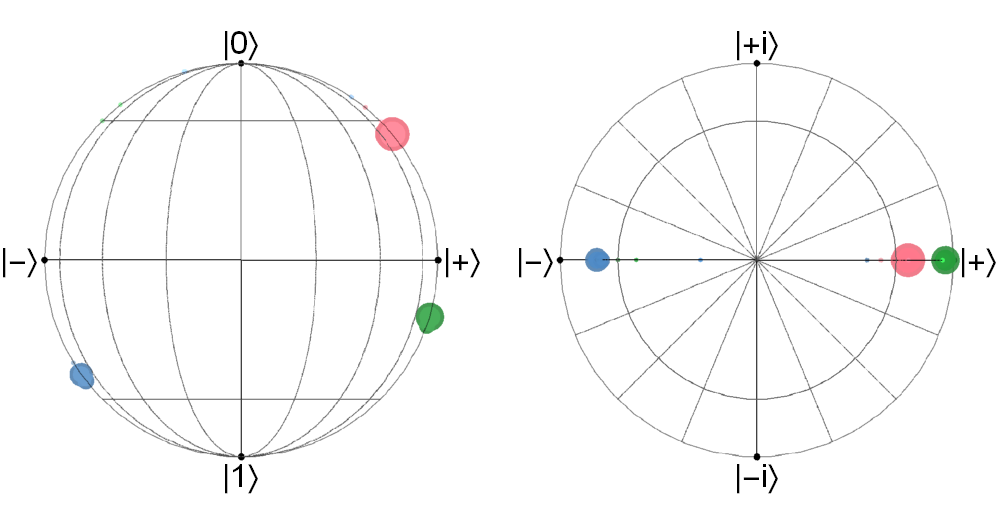}
	\caption{
\caphead{Example: Asymmetric process -- encoded states.}
	Quantum states produced by the learned $\hat{d}_q=2$ encoding map for the asymmetric process with $p=0.3$, $q=0.8$,
	represented on a Bloch sphere (viewed sideways and top-down).
Each point represents a pure state mapped to by one history (of length $5$),
 and the colour represents the history's associated causal state.
For larger points (representing the most probable $99\%$ of histories), the area is proportional to the probability of that history.
For smaller points (representing the next $0.999\%$ of histories), the opacity is proportional to the probability.}
	\label{fig:ex1states}
\end{figure}

\inlinesubheading{Case 2: the quasi-cycle}
For the quasi-cycle, we show the case where $p=0.5,\delta =0.1$.
The Kraus operators found for $\hat{d}_q=2$ are
\begin{equation}
	\begin{split}
		&A_0 = \begin{bmatrix}
		0.021-0.077i & -0.104+0.084i\\
		-0.277+0.236i &  0.656+0.098i\\
		\end{bmatrix},\\
		&A_1 = \begin{bmatrix}
		0.508+0.135i& -0.233+0.465i\\
		0.279-0.171i&  0.073+0.277i\\
		\end{bmatrix},\\
		&A_2 = \begin{bmatrix}
		0.371+0.271i&  0.304+0.010i\\
		-0.209+0.473i&  0.055+0.308i\\
		\end{bmatrix}.
	\end{split}
\end{equation}
The quantum states $\ket{\sigma_{\past{x}}}$ are plotted in~\fref{fig:ex2states}.
As shown in~\fref{fig:ex2states}, most of the states (weighted by probability of occurrence) lie in three distinct clusters -- but here, no pair of clusters is orthogonal.

\begin{figure}[htbp]
	\centering
	\includegraphics[width=0.95\linewidth]{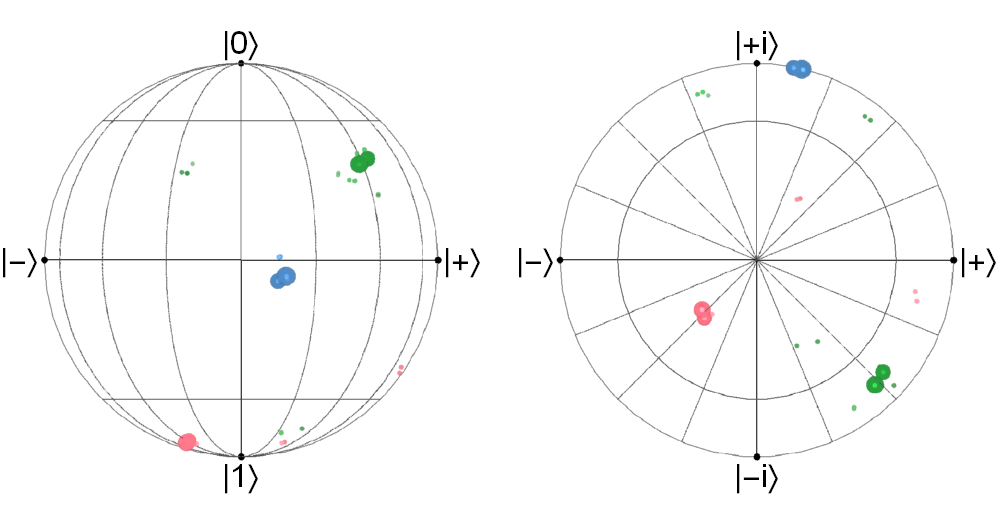}
		\caption{\caphead{Example: Quasicycle -- encoded states.}
		Quantum states produced by the learned $\hat{d}_q=2$ encoding map for a quasicycle with $p=0.5,\delta =0.1$, represented on a Bloch sphere. The diagram is interpreted as in \fref{fig:ex1states}.
	}
	\label{fig:ex2states}
\end{figure}

\inlinesubheading{Case 3: the uniform renewal process}
For the discrete renewal process (\fref{fig:UniformRenewalProcess}) we show the results for $N=3, \hat{d}_q= 2$.
The Kraus operators found are
\begin{equation}
	\begin{split}
		&A_0 = \begin{bmatrix}
			0.064+0.170i & 0.043 + 0.246i\\
			-0.196 + 0.825i & 0.499 - 0.079i
		\end{bmatrix},\\
		&A_1 = \begin{bmatrix}
			0.490 - 0.053i & 0.304 + 0.753i \\
			0.005 - 0.068i & 0.048 - 0.142i
		\end{bmatrix}
	\end{split}
\end{equation}
The associated encoded states are plotted in \fref{fig:ex3states}.

\begin{figure}[htbp]
	\centering
	\includegraphics[width=0.95\linewidth]{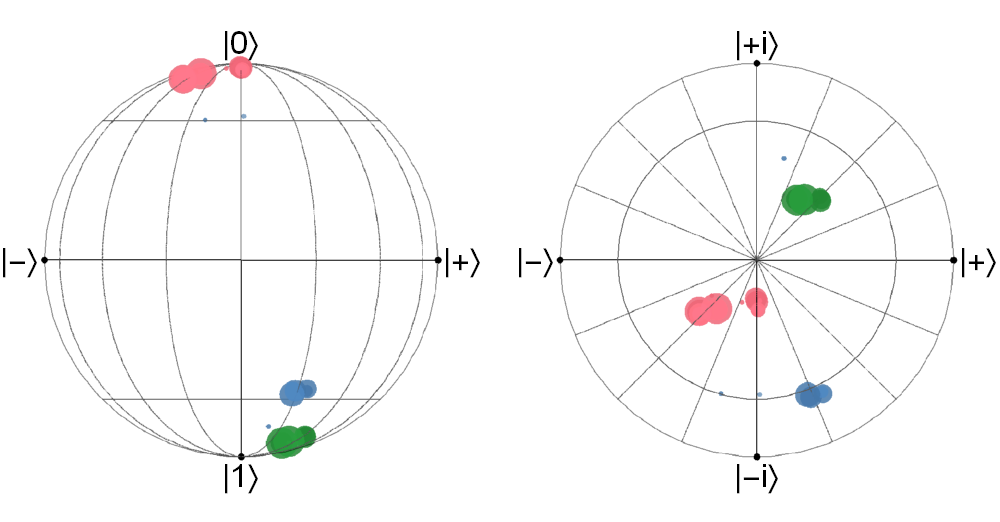}
	\caption{\caphead{Example: Renewal process -- encoded states.}
	Quantum states produced by the learned $\hat{d}_q=2$ encoding map for an $N=3$ discretization of the uniform renewal process, represented on a Bloch sphere. The diagram is interpreted as in \fref{fig:ex1states}.
	}
	\label{fig:ex3states}
\end{figure}

\section{Bounding the minimum-distortion classical model}
\label{sm:bounds}
\subsection{Past partitions, future morphs, and pre-models}
Let $\past{X}$ be the set of all possible observable histories of some process $P$,
 and $\mathcal{R}$ be a countable set of possible model memory states.
A {\em partition of histories} or {\em encoding} $\encod$ is then a function $\encod:\past{X}\to\mathcal{R}$.
We refer to the number of distinct values in $\mathcal{R}$ taken by $\mathcal{E}$ (excluding those that occur with measure zero with respect to $P(\past{X})$) as the model {\em dimension} of $\encod$, which we write as $D_c$.
That is, each $\encod$ effectively divides the set of histories into $D_c$ mutually exclusive partitions.
To use this encoding as a model, we must also supply a second map $\morph: \mathcal{R}\to\future{X}$ from memory states to a set of statistics over the future of the process known as the {\em future morphs} of the memory states.
Together, the composition of $\encod$ and $\morph$ defines a conditional probability distribution $\hat{P}(\future{X}|\mathcal{E}(\past{x}))$,
 which can be directly compared with a process's conditional probability distribution $P(\future{X}|\past{x})$ for each history $\past{x}$.

As a clarifying example, let us cast the $\varepsilon$--machine into this language.
Here $\mathcal{E} = \varepsilon$, the partition of histories according to the equivalence relation $\past{x} \sim \past{x}'$ if and only if  $P(\future{X}|\past{x}) = P(\future{X}|\past{x}^{'})$.
We write $\mathcal{R} \equiv \mathcal{S}$, where the elements of $\mathcal{S}$ are known as the process's {\em causal states}.
Meanwhile, the future morph associated with any casual state is the same as any constituent pasts within the class, i.e.\ for all $s_i\in\mathcal{S}$, $\hat{P}(\future{X}|s_i) = P(\future{X}|\past{x})$ where $\past{x}$ is an arbitrary choice of the past such that $\varepsilon(\past{x}) = s_i$.

In general, a couple $(\mathcal{E}, \mathcal{F})$ does not always correspond to a stationary hidden Markov model, since there might not be a systematic action $\mathcal{P}$ that can be repeatedly applied to $\mathcal{R}$ to sequentially produce a sample from the morph.
As such, we call the couple $(\encod, \morph)$ a {\em pre-model}:
\begin{definition}[Pre-model]
A {\bf pre-model} is the couple $(\encod, \morph)$ where
 $\encod$ is a function from histories ($\past{X}$) to memory states ($\mathcal{R}$),
 and $\morph$ is a function from memory states ($\mathcal{R}$) to distributions over future statistics ($P(\future{X}|r), \forall r \in \mathcal{R}$).
\end{definition}

We write the set of pre-models with maximum dimension $d$ as $\mathcal{M}^d_\infty$.
This set will be used to derive bounds on the properties of unifilar hidden Markov models, as we shall subsequently show.
First, we define a particular class of pre-model:
\begin{definition}[$K$-unifilar pre-model]
For $K\in\ints^+$, a {\bf $K$-unifilar pre-model} is the triple $(\encod,\morph, \mathcal{P}_K)$ consisting of a pre-model $(\encod, \morph)$ with a deterministic map $\mathcal{P}_K: \mathcal{R}\times \mathcal{X}^{\otimes K} \to\mathcal{R}$ such that $\mathcal{P}_K \! \left(\encod(\past{x}), x_{0:K} \right) = \encod(\past{x} x_{0:K})$ for all $x_{0:K}$ and all $\past{x}$.
\end{definition}
We write the set of all $K$-unifilar pre-models of maximum dimension $d$ as $\mathcal{M}^d_K$.
Crucially, the valid morphs of $K$-unifilar pre-models are highly constrained:
\begin{lemma}
\label{lem:KUnifWordK}
A $K$-unifilar pre-model $(\encod, \morph, \mathcal{P}_K)$ can always be specified by the triple $(\encod,\morph_K, \mathcal{P}_K)$ where $\morph_K$ is a map from $\mathcal{R}$ to probability distributions over words of length $K$.
\begin{proof}
First, let $r:= \mathcal{E}(\past{x})$ and
 note that the definitions of $\mathcal{P}_K$ and $\morph$ imply that
 $\mathcal{F}\left( \mathcal{E}\!\left(\past{x} x_{0:K} \right) \right) =
 \mathcal{F}\left( \mathcal{P}_K\!\left(\mathcal{E}\!\left(\past{x}\right), x_{0:K} \right) \right)$ allowing
 for any substitutions of the form (for any $L>K$):
\begin{equation}
\label{eq:KUnifSub}
P(X_{K:L}\!=\!x_{K:L} | r x_{0:K}) = P(X_{0:L\!-\!K}\!=\!x_{K:L} | \mathcal{P}_K(r, x_{0:K})),
\end{equation}
since if this was not the case then $\morph$ would not consistently assign the correct future morph for some memory states.

For notational brevity,
 we recursively define the set of functions $\{r_n:\mathcal{R}\times\mathcal{X}^{nK} \to \mathcal{R}\}_n$ as:
\begin{equation}
\label{eq:PManyTimes}
r_n\!\left(r, x_{0:nK}\right) = \mathcal{P}_K\!\left(r_{n\!-\!1}\!\left(r, x_{0:(n\!-\!1)K}\right), x_{(n\!-\!1)K:nK}\right)
\end{equation}
for $n\geq1$, and $r_0 = r$.
For any $M\in\ints^+$, we can thus expand the probability distribution
\begin{align}
P (X_{0:MK} \!=\! x_{0:MK} \,|\, r) \hspace{-9em} & \hspace{9em}  \nonumber \\
& = P (X_{0:K} \!=\! x_{0:K} \,|\, r)  \, P (X_{K:KM} \!=\! x_{K:MK} \,|\, r x_{0:K} ) \nonumber \\
& = P (X_{0:K} \!=\! x_{0:K} \,|\, r_0) \, P (X_{0:(M\!-\!1)K} \!=\! x_{K:MK} \,|\, r_1\!\left(r, x_{0:K}\right) )\nonumber \\
& = P (X_{0:K} \!=\! x_{0:K} \,|\, r_0) \, P (X_{0:K} \!=\! x_{K:2K} \,|\, r_1\!\left(r, x_{0:K}\right) )\nonumber \\
& \quad \cdot P (X_{K:(M\!-\!1)K}\!=\!x_{2K:MK} \,|\, r_1\!\left(r, x_{0:K}\right) x_{K:2K}) \nonumber \\
& = P (X_{0:K} \!=\! x_{0:K} \,|\, r_0) \, P (X_{0:K} \!=\! x_{K:2K} \,|\, r_1\!\left(r, x_{0:K}\right) )\nonumber \\
& \quad \cdot P (X_{0:(M\!-\!2)K}\!=\!x_{2K:MK} \,|\, r_2\!\left(r, x_{0:2K}\right)) \nonumber \\
& = \ldots \nonumber \\
& = \prod_{i=0}^{M-1} P (X_{0:K} \!=\! x_{iK:(i+1)K} \,|\, r_i(r, x_{0:iK})).
\end{align}
Here, Bayesian expansion allows us to make the first and third equalities,
 and $K$-unifilarity (via Eqs.~\ref{eq:KUnifSub} and \ref{eq:PManyTimes}) allows us to make the substitutions for the second and fourth equalities.
Thus, we can use this to generate a future morph for words of any length $KM$, as a function of the probability distribution over the next $K$ symbols.
(Probabilities over words of length that are not multiples of $K$ can always be found by taking marginals of a longer word that is a multiple of $K$.)

The contrapositive implies that if $\mathcal{F}$ does not assign probabilities in this way,
 then it cannot satisfy Eq.~\ref{eq:KUnifSub},
 and hence is not a $K$-unifilar pre-model.
\end{proof}
\end{lemma}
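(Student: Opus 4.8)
The plan is to show that the full future morph $\morph$ of a $K$-unifilar pre-model is redundant information: it is uniquely reconstructed from its restriction $\morph_K$ to words of length $K$ together with the transition map $\mathcal{P}_K$. The starting point is an internal-consistency observation. Since $\mathcal{P}_K(\encod(\past{x}), x_{0:K}) = \encod(\past{x}\,x_{0:K})$ by the definition of a $K$-unifilar pre-model, applying $\morph$ to both sides and using that $\morph$ assigns a single distribution to each memory state forces the ``shift'' identity
\begin{equation}
P(X_{K:L} = x_{K:L}\,|\,r\, x_{0:K}) = P(X_{0:L-K} = x_{K:L}\,|\,\mathcal{P}_K(r,x_{0:K}))
\end{equation}
for every $L > K$: the conditional future after emitting $x_{0:K}$ from memory state $r$ must coincide with the future morph of the memory state $\mathcal{P}_K(r,x_{0:K})$ that is actually reached.

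Next I would introduce notation for iterated transitions, setting $r_0 = r$ and $r_n(r, x_{0:nK}) = \mathcal{P}_K\!\left(r_{n-1}(r, x_{0:(n-1)K}),\, x_{(n-1)K:nK}\right)$, so that $r_n$ denotes the memory state reached after $nK$ steps along the observed block. The core step is then a telescoping expansion: for any $M\in\ints^+$, repeatedly apply the Bayesian chain rule to split off the first $K$ symbols, and then use the shift identity to rewrite the remaining conditional probability as a future morph of the updated memory state $r_1$, then $r_2$, and so on. Iterating $M$ times gives
\begin{equation}
P(X_{0:MK} = x_{0:MK}\,|\,r) = \prod_{i=0}^{M-1} P\!\left(X_{0:K} = x_{iK:(i+1)K}\,\big|\,r_i(r, x_{0:iK})\right),
\end{equation}
so every length-$MK$ probability is determined entirely by the length-$K$ morphs. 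Probabilities for words whose length is not a multiple of $K$ then follow by marginalizing a longer word of length a multiple of $K$, so $\morph$ is fully reconstructed from $\morph_K$ and $\mathcal{P}_K$.

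For the converse direction I would argue by contrapositive: if a candidate morph failed to assign probabilities according to the product formula above, it would violate the shift identity for some memory state and some emitted block, and hence could not be the morph of a $K$-unifilar pre-model. This establishes that $(\encod, \morph_K, \mathcal{P}_K)$ carries exactly the same information as $(\encod, \morph, \mathcal{P}_K)$.

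I expect the main obstacle to be the index bookkeeping in the telescoping expansion -- keeping the ranges of the $x$-blocks aligned with the correct iterated memory state $r_i$ at each stage of the recursion -- together with making the initial consistency step airtight, i.e.\ justifying carefully why single-valuedness of $\morph$ on memory states \emph{forces} the shift identity rather than merely being consistent with it. Everything downstream of that is a routine induction using Bayes' rule.
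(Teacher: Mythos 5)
Your proposal is correct and follows essentially the same route as the paper's proof: the shift identity forced by single-valuedness of $\morph$ on memory states, the iterated transition states $r_n$, the telescoping Bayesian expansion into a product of length-$K$ morph probabilities, the marginalization remark for other word lengths, and the concluding contrapositive. No substantive differences to note.
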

An immediate corollary of this is that a $1$-unifilar pre-model is exactly a unifilar hidden Markov model.
It also immediately follows that any $K$-unifilar pre-model for $K \in \ints^+$ is also an $NK$-unifilar pre-model for $N\in\ints^+$ -- and particularly, unifilar models are $K$-unifilar pre-models for all $K\in\ints^+$.
For any maximum dimension $d$, this sets up the hierarchy $\mathcal{M}^d_1 \subseteq \mathcal{M}^d_K \subseteq \mathcal{M}^d_\infty$.

\subsection{Calculating the distortion $D_e$}
\label{sm:dist}
For the purposes of this article, we are interested in minimizing the distortion (Eq.~\ref{eq:Distortion}) of models.
Particularly, consider two processes $P$ and $\hat{P}$,
 and for calculational convenience, define the arguments within the limit of Eq.~\ref{eq:Dfutpast} and  Eq.~\ref{eq:Distortion} as
\begin{align}
	\label{eq:dLDef}
	D_{L} (P,\hat{P}) & = \sum_{\past{x}} P(\past{x}) \,  D_L(P, \hat{P} | \past{x}),
\end{align}
such that (Eq.~\ref{eq:Distortion}) may be written
\begin{equation}
D_e(P, \hat{P}) = \lim_{L\to\infty} D_L(P,\hat{P}).
\end{equation}

\begin{lemma}\label{lem:stationaryDist}
For stationary processes $P$ and $\hat{P}$:
 \begin{equation}
	 D_e(P,\hat{P}) = D_{K}(P,\hat{P}) = D_{1}(P,\hat{P}) \qquad \forall K\in\ints^+.
 \end{equation}
\begin{proof}
	According to Bayesian rules, for $L>1$:
	\begin{equation}
		P(x_{0:L}|\past{x}) = P(x_{0}|\past{x}) P(x_{1:L} | \past{x} x_0).
	\end{equation}
Then (expanding the definition of KL-divergence),
\begin{align}
	\label{eq:KLStationary}
LD_L(P,\hat{P}) \hspace{-3em}&\hspace{3em}
		= \sum_{\past{x}}P(\past{x}) \sum_{x_{0:L}}P(x_{0:L} |\past{x})\log \frac{P(x_{0:L}|\past{x})}{\hat{P}(x_{0:L} |\past{x})} \nonumber\\
		& = \sum_{\past{x}}P(\past{x}) \sum_{x_0}P(x_0 |\past{x})\log \frac{P(x_0|\past{x})}{\hat{P}(x_0 |\past{x})} \nonumber\\
		&\qquad +  \sum_{\past{x}}P(\past{x})\sum_{x_{0:L}}P(x_{0:L} | \past{x}) \log \frac{P(x_{1:L}|\past{x}x_0)}{\hat{P}(x_{1:L} | \past{x}x_0)}.
\end{align}
Since $P$ and $\hat{P}$ are stationary, we can substitute $P(x_{1:L} | \past{x} x_0) = P(x_{0:L-1} | \past{x} )$, and hence for $L>1$:
\begin{equation}
LD_L(P,\hat{P}) = D_{1}(P,\hat{P}) + (L-1) D_{L-1}(P,\hat{P}).
\end{equation}
Then, inductively
\begin{equation}
	D_L(P,\hat{P}) = D_1(P,\hat{P}) \qquad \forall L\geq 1,
\end{equation}
and hence also the limit
\begin{equation}
	D_e(P,\hat{P}) = \lim_{L\to\infty} D_L(P,\hat{P}) = D_1(P,\hat{P}).
\end{equation}
\end{proof}
\end{lemma}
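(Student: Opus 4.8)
The plan is to establish a one-step recursion relating $D_L$ to $D_{L-1}$ and then solve it by induction. First I would expand the length-$L$ conditional KL divergence using the Bayesian chain rule $P(x_{0:L}\mid\past{x}) = P(x_0\mid\past{x})\,P(x_{1:L}\mid\past{x}x_0)$ (and likewise for $\hat{P}$) inside the definition of $D_L(P,\hat{P})$. Since $\log$ turns the product into a sum, the length-$L$ divergence splits into a term involving only the marginal over $x_0$, which is exactly $D_1(P,\hat{P})$, plus a remainder term carrying the conditional divergence of $x_{1:L}$ given the extended history $\past{x}x_0$. This is precisely Eq.~\ref{eq:KLStationary} in the excerpt.

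The heart of the argument is to recognise the remainder term as $(L-1)D_{L-1}(P,\hat{P})$. Here I would use two facts in tandem: first, $P(\past{x})P(x_0\mid\past{x}) = P(\past{x}x_0)$, so summing over $\past{x}$ and $x_0$ is nothing but summing over all pasts of the process weighted by their stationary probability; second, stationarity of both $P$ and $\hat{P}$ means the conditional law of the next $L-1$ symbols given a history is invariant under the time shift, so that $\sum_{\past{x},x_0}P(\past{x}x_0)\,[\,\cdots P(x_{1:L}\mid\past{x}x_0)\cdots]$ equals $\sum_{\past{x}'}P(\past{x}')\,[\,\cdots P(x_{0:L-1}\mid\past{x}')\cdots]$, and the same for $\hat{P}$ in the denominator of the log-ratio. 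Combining these yields $L\,D_L(P,\hat{P}) = D_1(P,\hat{P}) + (L-1)D_{L-1}(P,\hat{P})$ for all $L>1$.

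From this recursion an easy induction gives $D_L(P,\hat{P}) = D_1(P,\hat{P})$ for every $L\geq 1$: the base case $L=1$ is trivial, and if $D_{L-1}=D_1$ then $L D_L = D_1 + (L-1)D_1 = L D_1$. In particular $D_K(P,\hat{P}) = D_1(P,\hat{P})$ for every finite $K\in\ints^+$, and since the sequence $(D_L)_L$ is constant its limit $D_e(P,\hat{P}) = \lim_{L\to\infty} D_L(P,\hat{P})$ also equals $D_1(P,\hat{P})$, which is the full chain of claimed equalities.

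I expect the only genuinely delicate point to be the re-indexing in the second step: one must check that every past of the process really does arise as an extension $\past{x}x_0$ with the correct weight, and that stationarity is invoked for both $P$ and $\hat{P}$ (not just $P$) so that the shift acts identically on the numerator and denominator inside the logarithm. Questions of absolute convergence of the a priori infinite sums over histories are a minor concern, handled as elsewhere in the paper by appealing to finite Markov order, or simply by noting that all summands are nonnegative.
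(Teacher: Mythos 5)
Your proposal is correct and follows essentially the same route as the paper's proof: the Bayesian chain-rule split of the length-$L$ KL divergence into $D_1$ plus a shifted remainder, the identification of that remainder with $(L-1)D_{L-1}$ via $P(\past{x})P(x_0\mid\past{x})=P(\past{x}x_0)$ and stationarity of both $P$ and $\hat{P}$, and the induction plus trivial limit. Your explicit attention to the re-indexing of pasts and to invoking stationarity for $\hat{P}$ as well as $P$ is a slightly more careful rendering of the step the paper states in one line, but it is the same argument.
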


Similarly, this enables a shortcut to calculate the distortion between a process and a $K$-unifilar pre-model:
\begin{lemma}
\label{lem:MarkovOrderDistortion}
	Let $P$ be some stationary process, and $(\encod,\morph_K,\mathcal{P}_K)$ be a $K$-unifilar pre-model whose future morph is generated by $\hat{P}(X_{0:K} | \encod(\past{x}))$.
	Then,
	\begin{equation}
		D_e(P,\hat{P}) = D_{K}(P,\hat{P}).
	\end{equation}
\begin{proof}
We can group each word of $K$ contiguous symbols from process $P$ over alphabet $\mathcal{X}$ into a single symbol of process $Q$ with alphabet $\mathcal{Y} := \mathcal{X}^K$, such that the distributions over the two processes are related by
\begin{equation}
Q(\cdots,y_0,y_1,\cdots) = P(\cdots,x_{0:R},x_{R:2R},\cdots).
\end{equation}
We similarly group $\hat{P}$ into $\hat{Q}$.
Since $(\encod,\mathcal{F},\mathcal{P}_K)$ is a $K$-unifilar pre-model for $\hat{P}$,
 it immediately defines a unifilar model for $\hat{Q}$ with encoding $\mathcal{E}_Q : \past{Y} \to \mathcal{R}$
 satisfying $\mathcal{E}_Q(\past{y}) = \mathcal{E}(\past{x})$ where $\past{y}$ is the grouping of $\past{x}$.
Such combination has no effect on the KL--divergence and hence the distortion satisfies
\begin{equation}
D_e(Q,\hat{Q}) = K D_e(P,\hat{P}).
\end{equation}
Meanwhile, using Lemma~\ref{lem:stationaryDist}):
\begin{align}
D_e(Q,\hat{Q}) & = D_1(Q,\hat{Q}) \nonumber \\
  & = \sum_{\past{y}} Q(\past{y}) D_1\!\left( Q, \hat{Q} | \past{y} \right) \nonumber \\
  & = \sum_{\past{x}} P(\past{x}) K D_{K}\!\left( P, \hat{P} |\past{x} \right) \nonumber \\
  & = K D_K(P,\hat{P}).
\end{align}
And hence $D_e(P,\hat{P}) =  D_K(P,\hat{P})$.
\end{proof}
\end{lemma}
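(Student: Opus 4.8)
\textbf{Proof plan for Lemma~\ref{lem:MarkovOrderDistortion}.}
The plan is to reduce the claim to the already-established Lemma~\ref{lem:stationaryDist} by passing to a ``block-coarse-grained'' process. First I would define the auxiliary process $Q$ over the block alphabet $\mathcal{Y}=\mathcal{X}^{K}$ obtained by grouping $K$ consecutive symbols of $P$ into a single symbol, and likewise form $\hat{Q}$ from $\hat{P}$. The first key observation is that a $K$-unifilar pre-model for $\hat{P}$ is, by construction, an ordinary ($1$-)unifilar model for $\hat{Q}$: the encoding $\encod_Q(\past{y}):=\encod(\past{x})$ is well-defined because $\past{y}$ determines $\past{x}$, the deterministic update $\mathcal{P}_K$ becomes a single-step deterministic update on the block alphabet, and Lemma~\ref{lem:KUnifWordK} guarantees the morph $\morph_K$ is exactly a distribution over one block symbol. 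This is the structural heart of the argument, and the main thing to get right is checking that the $K$-unifilarity conditions (Eqs.~\ref{eq:KUnifSub}--\ref{eq:PManyTimes}) translate verbatim into the one-step unifilarity conditions for $Q$.

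Next I would relate the two distortions quantitatively. Because KL-divergence is invariant under the bijective relabelling that groups $K$ symbols into one block, one has $\mathcal{D}_{\rm KL}(P(X_{0:MK}|\past{x})\,\|\,\hat{P}(X_{0:MK}|\past{x})) = \mathcal{D}_{\rm KL}(Q(Y_{0:M}|\past{y})\,\|\,\hat{Q}(Y_{0:M}|\past{y}))$ for the corresponding histories. Dividing by the length and taking $L=MK\to\infty$, the normalization factor $1/L$ versus $1/M$ introduces exactly a factor of $K$, so $D_e(Q,\hat Q)=K\,D_e(P,\hat P)$. I would spell this out carefully since the bookkeeping of the $1/L$ normalization in \eqref{eq:Dfutpast} against the block length is where an off-by-$K$ error could creep in.

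Finally I would invoke Lemma~\ref{lem:stationaryDist} applied to the stationary processes $Q$ and $\hat{Q}$, which gives $D_e(Q,\hat Q)=D_1(Q,\hat Q)$. Then I would rewrite $D_1(Q,\hat Q)$ back in the original symbols: $D_1(Q,\hat Q)=\sum_{\past y}Q(\past y)D_1(Q,\hat Q|\past y)=\sum_{\past x}P(\past x)\,K\,D_K(P,\hat P|\past x)=K\,D_K(P,\hat P)$, where the factor $K$ again comes from one block symbol equalling $K$ original symbols together with the $1/L$ normalization convention. Combining $D_e(Q,\hat Q)=K D_e(P,\hat P)$ with $D_e(Q,\hat Q)=K D_K(P,\hat P)$ and cancelling $K$ yields $D_e(P,\hat P)=D_K(P,\hat P)$, as claimed.

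\textbf{Anticipated obstacle.} The only genuinely delicate point is making the block-to-symbol correspondence airtight: verifying that the $K$-unifilar update $\mathcal{P}_K$ really does induce a bona fide single-step unifilar model on $Q$ (so that Lemma~\ref{lem:stationaryDist} is applicable at all), and keeping the normalization factors of $K$ consistent between $D_e$, $D_1$ and $D_K$ in both directions of the argument. The KL-divergence invariance under grouping and the stationarity of $Q$ (inherited from that of $P$) are routine once the setup is fixed.
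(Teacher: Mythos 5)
Your proposal is correct and follows essentially the same route as the paper's proof: block the process into words of length $K$ to form $Q$ and $\hat{Q}$, observe that the $K$-unifilar pre-model becomes a unifilar model for $\hat{Q}$, use invariance of the KL-divergence under grouping to get $D_e(Q,\hat{Q})=K\,D_e(P,\hat{P})$, and apply Lemma~\ref{lem:stationaryDist} to $Q,\hat{Q}$ before converting $D_1(Q,\hat{Q})$ back to $K\,D_K(P,\hat{P})$. The normalization bookkeeping you flag is handled identically in the paper, so no gap remains.
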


\subsection{Bounding the minimum distortion}

A consequence of the above lemmas is to make it easier to calculate bounds that minimize $K$-unifilar pre-models and unifilar models:
\begin{lemma}
\label{lem:ChangingGoalposts}
Let $P$ be a stationary process, and let $K, N \in \ints^+$.
Then for every dimension $d$
\begin{equation}
\label{eq:ChangingGoalposts}
\min_{M \in \mathcal{M}^d_{K}} D_e(P, M) \geq \min_{M \in \mathcal{M}^d_{KN}} D_K(P, M).
\end{equation}
\begin{proof}
First, Lemma~\ref{lem:MarkovOrderDistortion} tells us that for any $K$-unifilar pre-model $D_e(P,M) = D_K(P,M)$
 and hence $\min_{M \in \mathcal{M}_{K}} D_e(P, M) = \min_{M \in \mathcal{M}_{K}} D_K(P, M)$.
Next, the hierarchy of models $\mathcal{M}^d_K \subseteq \mathcal{M}^d_{KN}$,
 and hence a minimization over $\mathcal{M}^d_{KN}$ lower bounds a minimization over $\mathcal{M}^d_{K}$.
This proves Eq.~\ref{eq:ChangingGoalposts}.
\end{proof}
\end{lemma}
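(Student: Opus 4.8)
\textbf{Proof proposal for Lemma~\ref{lem:ChangingGoalposts}.}

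The plan is to obtain the inequality by chaining the two preceding lemmas with the model hierarchy $\mathcal{M}^d_K \subseteq \mathcal{M}^d_{KN}$ that was noted as a corollary of Lemma~\ref{lem:KUnifWordK}. First I would rewrite the left-hand side in terms of the finite-resolution distortion $D_K$ rather than the limiting quantity $D_e$: every member $M$ of $\mathcal{M}^d_K$ is a $K$-unifilar pre-model, so Lemma~\ref{lem:MarkovOrderDistortion} applies and gives $D_e(P,M) = D_K(P,M)$. Taking minima over $\mathcal{M}^d_K$ on both sides, $\min_{M\in\mathcal{M}^d_K} D_e(P,M) = \min_{M\in\mathcal{M}^d_K} D_K(P,M)$. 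This is the step that converts the quantity of interest into something measured only through words of length $K$ on the class we actually care about.

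Second, I would invoke the inclusion $\mathcal{M}^d_K \subseteq \mathcal{M}^d_{KN}$: a $K$-unifilar pre-model is automatically $KN$-unifilar, since composing the deterministic update $\mathcal{P}_K$ with itself $N$ times produces a valid map $\mathcal{P}_{KN}:\mathcal{R}\times\mathcal{X}^{\otimes KN}\to\mathcal{R}$ satisfying the required consistency condition, and this enlargement does not raise the model dimension $d$. Minimising one and the same functional $D_K(P,\cdot)$ over a larger feasible set can only decrease (or leave unchanged) the infimum, so $\min_{M\in\mathcal{M}^d_K} D_K(P,M) \geq \min_{M\in\mathcal{M}^d_{KN}} D_K(P,M)$. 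Combining this with the equality from the first step yields exactly Eq.~\ref{eq:ChangingGoalposts}.

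The lemma is essentially bookkeeping, so there is no deep obstacle; the one point that needs care — and the closest thing to a subtlety — is that the right-hand side must use $D_K$ and \emph{not} $D_e$. For a member $M$ of the strictly larger class $\mathcal{M}^d_{KN}$, Lemma~\ref{lem:MarkovOrderDistortion} only gives $D_e(P,M) = D_{KN}(P,M)$, which need not equal $D_K(P,M)$, so the substitution used in the first step is unavailable on that class; the inequality is valid precisely because we deliberately retain the coarser, $K$-word-resolution distortion on the right. I would also make explicit that $D_K(P,M)$ is well defined for every $M\in\mathcal{M}^d_{KN}$, as its future morph already fixes the length-$K$ marginals (obtainable as marginals of the length-$KN$ word distributions guaranteed by Lemma~\ref{lem:KUnifWordK}), and that ``$\min$'' should be read as ``$\inf$'' if not attained, which does not affect the argument.
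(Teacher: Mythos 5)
Your proposal is correct and follows essentially the same route as the paper: convert $D_e$ to $D_K$ on $\mathcal{M}^d_K$ via Lemma~\ref{lem:MarkovOrderDistortion}, then enlarge the feasible set using $\mathcal{M}^d_K \subseteq \mathcal{M}^d_{KN}$ to lower-bound the minimum. Your extra remarks (why the inclusion holds, why the right-hand side must stay as $D_K$, and reading $\min$ as $\inf$) are sound elaborations of points the paper leaves implicit.
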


Why might we choose some finite $K$ rather than minimizing $D_e$ over generic ($K=\infty$) pre-models?
First, $\mathcal{M}_\infty$ is too permissive, and $D_e$ is too forgiving.
Consider any (stationary, aperiodic) process,
 and let us assign a $1$--dimensional model, but whose single future morph defines probabilities arbitrarily far into the future, with statistics that increasingly match those of $P$ (e.g.\ by assigning a weighted average of $P$'s causal state's future morphs).
For such a (highly non-unfilar) model, $D_e \to 0$, saturating the arithmetically obvious bound.
Second, the parameter space of $P\left(\future{X}|\mathcal{E}(\past{x})\right)$ is a priori infinite,
 and the above Lemmas allow us to greatly reduce the relevant parameter space of future morphs in our calculations to probability distributions over words of finite length $K$.

Although the above Lemmas allow us to reduce the dimensionality of $\morph$ assigned to each memory state,
 there still remains the problem that there are infinitely many possible partitions of the past $\encod$.
We will argue, with the next set of Lemmas, that for exploring minimum bounds, we can restrict ourselves to the finite set of coarse--grainings over the causal states.

First, we note that the problem of finding a minimum distortion model is trivial if we allow models with high--enough dimension:
\begin{lemma}
\label{lem:EpsMinDist}
For a process $P$ with topological complexity $d_c$,
 then for all model dimensions $\hat{d}_c \geq d_c$,
 there is a zero--distortion unifilar model $M$ such that
\begin{equation}
D_K(P, M) = 0 \qquad \forall K\in\ints^+.
\end{equation}
\begin{proof}
Such a minimum distortion model can be realized by the $\varepsilon$--machine of $P$, which has zero distortion.
By definition of causal states, $P(X_{0:K}|\past{x}) = P(X_{0:K}|\varepsilon(\past{x}))$ for all $K$ and $\past{x}$,
 and hence $\mathcal{D}_{\rm KL}(P(X_{0:K}|\past{x}) \, || \, P(X_{0:K}|\varepsilon(\past{x}))) = 0$ for all $L$ and $\past{x}$.
This implies that all $D_K(P, \hat{P}) = 0$.
\end{proof}
\end{lemma}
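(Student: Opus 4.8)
The plan is to exhibit the $\varepsilon$--machine of $P$ as the required model and verify directly that it is a legitimate unifilar model of dimension at most $\hat{d}_c$ and that its distortion vanishes at every truncation length $K$. First I would recall from the main text that the $\varepsilon$--machine has exactly $d_c$ causal states and is unifilar (this unifilarity was noted when predictive models were introduced: emitting $x_0$ from state $\varepsilon(\past{x})$ forces a transition to $\varepsilon(\past{x}x_0)$). Since $\hat{d}_c \geq d_c$, the $\varepsilon$--machine trivially satisfies the dimension constraint, so it is a member of $\mathcal{M}^{\hat{d}_c}_1$; by the corollary following Lemma~\ref{lem:KUnifWordK} it is also a $K$-unifilar pre-model for every $K$. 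The future morph attached to each causal state $s_i$ is, by construction, $\hat{P}(\future{X}|s_i) = P(\future{X}|\past{x})$ for any $\past{x}$ with $\varepsilon(\past{x}) = s_i$, which is well defined precisely because all histories in an equivalence class share the same conditional future.

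The key step is then the observation that the defining property of causal states gives $P(X_{0:K}|\past{x}) = P(X_{0:K}|\varepsilon(\past{x}))$ for every $K$ and every $\past{x}$: the marginal over the first $K$ outputs of a conditional future depends only on the equivalence class. Hence each term $\mathcal{D}_{\rm KL}\!\left(P(X_{0:K}|\past{x}) \,\|\, P(X_{0:K}|\varepsilon(\past{x}))\right)$ is the KL-divergence of a distribution from itself and therefore equals zero. Averaging over $\past{x}$ according to $P(\past{x})$ and dividing by $K$ gives $D_K(P,\hat{P}) = 0$ for all $K \in \ints^+$, which is the claim. (One can equally invoke Lemma~\ref{lem:MarkovOrderDistortion} to conclude $D_e(P,\hat{P}) = D_K(P,\hat{P}) = 0$, but the direct term-by-term argument is cleaner.)

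I do not expect a serious obstacle here; the statement is essentially a restatement of the defining property of the $\varepsilon$--machine combined with nonnegativity and vanishing-on-equality of KL-divergence. The only point requiring a word of care is the bookkeeping that the $\varepsilon$--machine genuinely fits the pre-model/unifilar formalism introduced in the appendix — i.e.\ that $(\varepsilon, \mathcal{F}, \mathcal{P}_1)$ with $\mathcal{P}_1(s_i, x_0) = \varepsilon(\past{x}x_0)$ is well defined independently of the representative $\past{x}$ chosen — but this is immediate from unifilarity. Nothing about the argument uses finiteness of $d_c$ beyond ensuring the dimension bound is meaningful.
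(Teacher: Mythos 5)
Your proposal is correct and follows essentially the same route as the paper's proof: exhibit the $\varepsilon$--machine, invoke the defining property of causal states to get $P(X_{0:K}|\past{x}) = P(X_{0:K}|\varepsilon(\past{x}))$, and conclude that every KL term (and hence every $D_K$) vanishes. Your extra bookkeeping that the $\varepsilon$--machine fits the unifilar pre-model formalism and satisfies the dimension bound is a harmless elaboration of what the paper leaves implicit.
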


\begin{lemma}
\label{lem:merge}
For any process $P$, for every dimension $d$,
	and for every $K\in\ints^+$,
	if two pasts $\past{y}$ and $\past{z}$ have identical future morphs in $P$,
	then there exists a pre-model $\left(\encod^d_{\min}, \morph\right)$
	that minimizes $D_K$ and satisfies
	\begin{equation}
	\label{eq:SameMorphsOpt}
		\encod^d_{\min}(\past{y}) = \encod^d_{\min}(\past{z}).
	\end{equation}
\begin{proof}
If $d \geq d_c$, the $\varepsilon$--machine has the minimum distortion (see Lemma~\ref{lem:EpsMinDist}) and the causal state partition $\varepsilon$ satisfies Eq.~\ref{eq:SameMorphsOpt} by definition.
Likewise, if $d_c=1$, there is only one encoding onto a single state and so Eq.~\ref{eq:SameMorphsOpt} is trivially satisfied.
Thus, it remains only to prove the cases where $1 < d < d_c$.

Suppose $\past{y}$ and $\past{z}$ have identical future morphs in $P$,
 and we have a pre-model $(\encod, \morph)$ where encoding $\encod(\past{y}) \neq \encod(\past{z})$.
Then without loss of generality (by switching the labels of $\past{y}$ and $\past{z}$ if necessary) the implied statistics $\hat{P}$ satisfy
\begin{equation}
	D_K(P,\hat{P} | \past{y}) \leq 	D_K(P, \hat{P} | \past{z}),
\end{equation}
where $	D_K(P, \hat{P} | \past{x} )$ is defined in Eq.~\ref{eq:dLDef}.

We can then construct a new pre-model with identical $\morph$,
 but with a new mapping $\encod^{'}$ identical to $\encod$ in every way, except it now maps $\past{z}$ to $\encod\!\left(\past{y}\right)$ instead of $\encod\!\left(\past{z}\right)$.
This has implied statistics $\hat{P}'$, and
\begin{equation}
	D_K(P, \hat{P}' | \past{z}) = D_K(P, \hat{P} | \past{y}) \leq D_K(P, \hat{P} | \past{z}).
\end{equation}
Hence $D_K$ of $(\encod', \morph)$ satisfies
\begin{align}
D_K(P, \hat{P}') \hspace{-3em} & \nonumber \\
	& = P(\past{z}) D_K(P, \hat{P}' | \past{z} ) + \sum_{\past{x} \neq \past{z}} P(\past{x}) D_K(P, \hat{P}' | \past{x}) \nonumber\\
	& = P(\past{z}) D_K(P, \hat{P}' | \past{z} ) + \sum_{\past{x} \neq \past{z}} P(\past{x}) D_K(P, \hat{P} | \past{x}) \nonumber\\
	& \leq P(\past{z}) D_K(P, \hat{P} | \past{z} ) + \sum_{\past{x} \neq \past{z}} P(\past{x}) D_K(P, \hat{P} | \past{x}) \nonumber\\
	& = D_K(P, \hat{P}).
\end{align}
It then follows that for any pre-model where $\past{y}$ and $\past{z}$ with identical future morphs map to different memory states,
 there is another pre-model of the same (or lower) dimension such that $\past{y}$ and $\past{z}$ map to the same memory state,
  and this pre-model has the same or lower distortion.
Hence, among the minimum distortion pre-models of a given process, there will always be a pre-model where $\past{y}$ and $\past{z}$ map to the same memory state.
\end{proof}
\end{lemma}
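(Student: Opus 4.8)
The plan is to proceed by a ``history-merging'' exchange argument, case-splitting on the relation between the target dimension $d$ and the topological complexity $d_c$. First I would dispose of the easy cases: if $d \geq d_c$, Lemma~\ref{lem:EpsMinDist} already hands us the $\varepsilon$-machine as a minimum-distortion model, and its encoding is $\varepsilon$, which by definition assigns the same causal state to any two pasts with coinciding future morphs, so Eq.~\ref{eq:SameMorphsOpt} holds verbatim. If $d_c = 1$ there is only a single memory state and the equality is vacuous. So the real content is the regime $1 < d < d_c$, where we cannot simply appeal to the $\varepsilon$-machine.

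For that regime, the approach is: take an arbitrary pre-model $(\encod,\morph)$ of dimension at most $d$ that achieves the minimum of $D_K$, and suppose it separates two pasts $\past{y},\past{z}$ with identical future morphs in $P$, i.e.\ $\encod(\past{y}) \neq \encod(\past{z})$. The key observation is that $D_K$ decomposes as a $P(\past{x})$-weighted sum of the per-history terms $D_K(P,\hat P \mid \past{x})$ from Eq.~\ref{eq:dLDef}. Relabelling $\past{y},\past{z}$ if necessary, assume $D_K(P,\hat P\mid\past{y}) \leq D_K(P,\hat P\mid\past{z})$. Now I would define a modified encoding $\encod'$ that agrees with $\encod$ everywhere except it reroutes $\past{z}$ to $\encod(\past{y})$, keeping $\morph$ unchanged. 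The crucial point — and where the hypothesis ``identical future morphs'' is used — is that since $\past{y}$ and $\past{z}$ have the same true conditional future $P(\future X\mid\cdot)$, rerouting $\past{z}$ to $\past{y}$'s memory state gives it the \emph{same} approximating distribution that $\past{y}$ had, so the new per-history term at $\past{z}$ equals the old per-history term at $\past{y}$; every other history is untouched. Hence term-by-term $D_K(P,\hat P' \mid \past{x}) \le D_K(P,\hat P\mid\past{x})$ for all $\past x$, with a strict-or-equal improvement at $\past z$, and summing with weights $P(\past x)$ gives $D_K(P,\hat P') \le D_K(P,\hat P)$. Since $\encod'$ uses no more distinct memory states than $\encod$, it lies in the same dimension class, so it too is a minimizer and additionally satisfies $\encod'(\past y)=\encod'(\past z)$.

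Finally I would note that this is exactly the assertion: the existence of \emph{some} minimizing pre-model $(\encod^d_{\min},\morph)$ satisfying Eq.~\ref{eq:SameMorphsOpt}. A small point worth being explicit about is that the argument merges one pair at a time; if one wants a single encoding that merges \emph{all} such pairs simultaneously one would iterate, but the statement as written only asks for the pair $(\past y,\past z)$, so a single application suffices.

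The main obstacle is not any calculation but getting the logic of the exchange argument airtight — specifically, verifying that moving $\past z$ onto $\past y$'s memory state genuinely leaves $\hat P(\future X\mid \encod(\past x))$ unchanged for every $\past x\neq\past z$ (it does, since $\encod'$ only differs at $\past z$ and $\morph$ is fixed), and that it assigns $\past z$ precisely the approximant that was optimal-or-at-least-as-good for it, which is where ``$\past y,\past z$ have identical future morphs in $P$'' does the work: it makes $D_K(P,\hat P'\mid\past z) = D_K(P,\hat P\mid\past y)$ rather than merely bounding it. One should also be slightly careful that $\past y$ and $\past z$ each occur with positive $P$-measure (otherwise the per-history weighting is degenerate, but then the merge is harmless anyway), and that the dimension genuinely does not increase, which is clear since the image of $\encod'$ is a subset of the image of $\encod$.
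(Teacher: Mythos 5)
Your proposal is correct and follows essentially the same route as the paper's own proof: the same case split ($d\geq d_c$ via the $\varepsilon$-machine, $d_c=1$ trivially, then $1<d<d_c$), and the same exchange argument that reroutes $\past{z}$ to $\encod(\past{y})$ with $\morph$ fixed, using the identical-future-morph hypothesis to get $D_K(P,\hat{P}'\mid\past{z}) = D_K(P,\hat{P}\mid\past{y})$ and hence no increase in the weighted sum. Your explicit remarks on where the hypothesis is used, the non-increase of dimension, and the one-pair-at-a-time nature of the merge match the paper's reasoning.
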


The above lemma implies the following:
\begin{lemma}
\label{lem:MergeCausal}
For a process $P$ with topological complexity $d_c$
For every dimension $d < d_c$,
 and every $K\in\ints^+$
there is a pre-model $M$ that minimizes $D_K(P,M)$ whose encoding $\encod^d_{\min}$ is a map onto a coarse graining of the causal states $\mathcal{S}$.
\begin{proof}
This follows from Lemma~\ref{lem:merge} by noting that pasts in the same causal state have the same future morph by definition.
\end{proof}
\end{lemma}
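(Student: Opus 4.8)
The plan is to obtain Lemma~\ref{lem:MergeCausal} as a direct consequence of Lemma~\ref{lem:merge}, using the single extra observation that any two pasts lying in the same causal state automatically satisfy the hypothesis of Lemma~\ref{lem:merge}. Indeed, by the defining property of causal states, $\varepsilon(\past{y}) = \varepsilon(\past{z})$ implies $P(\future{X}|\past{y}) = P(\future{X}|\past{z})$, so $\past{y}$ and $\past{z}$ have identical future morphs in $P$. Thus Lemma~\ref{lem:merge} tells us that, starting from any minimizing pre-model, we may relocate one of a pair of same-causal-state pasts onto the other's memory state without increasing $D_K$ and without increasing the dimension.

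First I would take a pre-model $M = (\encod,\morph)$ of dimension at most $d$ that minimizes $D_K(P,M)$; such a minimizer exists by Lemma~\ref{lem:merge}. Since $d < d_c$ is finite, $\encod$ takes at most $d$ distinct values, so for each causal state $s\in\mathcal{S}$ the image $\encod(\{\past{x} : \varepsilon(\past{x})=s\})$ is a finite set of memory states. The quantity $D_K(P,\hat{P}|\past{x})$ depends on $\past{x}$ only through the pair $(\varepsilon(\past{x}),\encod(\past{x}))$ --- because $P(X_{0:K}|\past{x})$ is fixed across a causal state and $\hat{P}(X_{0:K}|\encod(\past{x}))$ is determined by the assigned memory state --- so I would let $r_s$ be the memory state in that finite image at which this distortion contribution is smallest, and define $\encod^d_{\min}(\past{x}) := r_{\varepsilon(\past{x})}$, keeping $\morph$ unchanged. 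Applying Lemma~\ref{lem:merge} finitely many times inside each causal state (repeatedly merging the worst-performing memory state in the image of $s$ onto $r_s$) shows that $(\encod^d_{\min},\morph)$ has dimension at most $d$ and $D_K(P,(\encod^d_{\min},\morph)) \leq D_K(P,M)$, hence is itself a minimizer. By construction $\encod^d_{\min}$ is constant on each causal state, so it factors through $\varepsilon$ and is exactly a coarse graining of $\mathcal{S}$, which is what the lemma asserts.

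The step I expect to need the most care is the passage from the pairwise statement of Lemma~\ref{lem:merge} to a simultaneous merge over all of $\mathcal{S}$: a priori each causal state contains infinitely many pasts and there may be infinitely many (even uncountably many) causal states, so one cannot literally iterate the lemma over the whole family. The resolution is that $\encod$ has finite range, so only at most $d$ distinct memory states appear inside any single causal state, making the within-state collapse a finite iteration; and because the modifications in distinct causal states act on disjoint sets of pasts and never alter $\morph$, their individual weak decreases of the $P(\past{x})$-weighted sum defining $D_K$ simply add up, so the global encoding $\encod^d_{\min}$ inherits $D_K(P,\cdot) \leq D_K(P,M)$ regardless of how many causal states there are.
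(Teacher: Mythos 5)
Your proposal is correct and takes essentially the same route as the paper: its one-line proof just invokes Lemma~\ref{lem:merge} plus the observation that pasts in the same causal state share a future morph, and your argument is that same mechanism carried out in bulk (re-mapping every past in a causal state to the best memory state in its image while keeping $\morph$ fixed, so each term of the $P(\past{x})$-weighted sum defining $D_K$ is non-increasing). The extra care you take over performing the merges simultaneously across infinitely many pasts and causal states is a detail the paper leaves implicit, and your handling of it is sound.
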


Such coarse--grained models admit a computational shortcut for calculating their distortion:
 we can define $\pi_i := \sum_{\past{x} \in s_i} P(\past{x})$ for each causal state $s_i\in\mathcal{S}$,
 such that the distortion is
\begin{equation}
	D_K(P, \hat{P}) = \sum_{i=1}^{d_c} \pi_i \, D_K(P,\hat{P} | s_i),
\end{equation}
where for each $s_i$, $D_K(P,\hat{P} | s_i) := D_K(P, \hat{P} | \past{x})$ for one arbitrary choice of $\past{x}\in s_i$ (since by definition the value is the same for all such choices).

Crucially, the above Lemma enables an exhaustive search for $D_K$--optimal pre-models for any process with finite topological complexity -- instead of having to consider an infinite number of possible partitions, we can iterate through the various combinations of causal states.

\begin{lemma}
\label{lem:LeaveItAlone}
For a process $P$,
 for every dimension $d$,
 and every $K\in\ints^+$
 there is pre-model $M = (\encod^d_{\min}, \morph^d_{\min}) \in M^d_\infty$ that minimizes $D_K(P,M)$
 such that every memory state $r_i$ whose entire pre-image (except for a measure zero subset with respect to $P(\past{x})$) is mapped to the causal state $s_{i'}$,
 has the future morph $\morph^d_{\min}(r_i) = P(\future{X}|s_{i'})$.
\begin{proof}
Again, we consider the case where model dimension $d<d_c$ since the case $d\geq d_c$ is trivially satisfied by the $\varepsilon$--machine (Lemma~\ref{lem:EpsMinDist}).
From Lemma~\ref{lem:MergeCausal} we may restrict ourselves to coarse-grainings of $\varepsilon$--machines.

Let $s_i$ be the causal states of the process, indexed such that $i=1\ldots k$ correspond to states that are not merged, and $i=k+1\ldots d_c$ are merged into states $r_1 \ldots r_{d-k}$ (i.e.\ the states of the model are $\mathcal{R} = \{s_1,\ldots s_k, r_1, \ldots r_{d-k}\}$).
As the Lemma's claim is vacuously true if $k=0$, we consider the cases when $k\geq 1$.
Then, using the coarse--graining structure, and choosing an arbitrary past $\past{x}_i \in s_i$ for each causal state,
\begin{equation}
D_K(P,\hat{P}) = \sum_{i=1}^k \pi_i D_K(P, \hat{P} | \past{x}_i) +  \sum_{i=k+1}^{d} \pi_i D_K(P, \hat{P} | \past{x}_i).
\end{equation}
If for any $i \in [1,k]$ we have $D_K(P, \hat{P} | \past{x}_i) > 0$, we can instead form a new pre-model that assigns the  future morph  $\morph(s_i) = P(\future{X}|s_{i})$.
Then, in this new pre-model $D_k(P, \hat{P}' | \past{x}_i) = 0$ for all such $s_i$.
Thus,
\begin{equation}
D_K(P,\hat{P}') = \sum_{i = k+1}^d \pi_i D_K(P, \hat{P} | s_i) \leq D_K(P,\hat{P}).
\end{equation}
\end{proof}
\end{lemma}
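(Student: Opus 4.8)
The plan is to reduce the statement to a simple \emph{morph-replacement} argument built on the earlier lemmas. First I would dispose of the easy regime $d \ge d_c$: by Lemma~\ref{lem:EpsMinDist} the $\varepsilon$--machine is already a zero--distortion minimizer of $D_K$, and since each of its memory states is a single causal state $s_i$ carrying future morph $P(\future{X}|s_i)$, the claim holds with $\encod^d_{\min} = \varepsilon$. For the remaining range $1 \le d < d_c$ I would invoke Lemma~\ref{lem:MergeCausal} to fix a $D_K$--minimizing pre-model $M = (\encod,\morph) \in \mathcal{M}^d_\infty$ whose encoding is a coarse graining of the causal states. Writing its memory states as $\mathcal{R} = \{s_1,\dots,s_k, r_1,\dots,r_{d-k}\}$, where $s_1,\dots,s_k$ are the ``pure'' states whose pre-image is (up to a $P(\past{x})$--null set) a single causal state and $r_1,\dots,r_{d-k}$ are the genuinely merged ones, I note that the claim concerns only the pure states.

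The key step is to build a new pre-model $M' = (\encod,\morph')$ with the \emph{same} encoding (hence the same dimension $d$ and the same pure/merged partition), obtained by setting $\morph'(s_i) := P(\future{X}|s_i)$ on every pure state while leaving $\morph'(r_j) := \morph(r_j)$ on every merged state. I would then use the coarse--graining decomposition $D_K(P,\hat{P}) = \sum_i \pi_i\, D_K(P,\hat{P}|s_i)$ with $\pi_i := \sum_{\past{x}\in s_i} P(\past{x})$, together with the fact that $D_K$ sees a memory state's morph only through its marginal on the first $K$ symbols. After the replacement, each pure term becomes $D_K(P,\hat{P}'|s_i) = \frac{1}{K}\mathcal{D}_{\rm KL}(P(X_{0:K}|s_i)\,\|\,P(X_{0:K}|s_i)) = 0$, while every merged term is untouched, so $D_K(P,\hat{P}') \le D_K(P,\hat{P})$ term by term. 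Since $M$ was assumed to be a minimizer, equality must hold, so $M'$ is also a minimizer; by construction it satisfies $\morph'(s_i) = P(\future{X}|s_i)$ for every pure memory state, which is precisely the assertion.

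The remaining points are bookkeeping, and I would be careful about three of them. First, the replacement morph $P(\future{X}|s_i)$ is a genuine distribution over $\future{X}$ and $\mathcal{M}^d_\infty$ imposes no unifilarity (or other) constraint on morphs, so $M'$ really lies in $\mathcal{M}^d_\infty$. Second, the ``except for a measure zero subset'' qualifier costs nothing: a $P(\past{x})$--null set of stray pasts contributes $0$ to every $\pi_i$ and to the distortion sum, so ``the entire pre-image is essentially $s_{i'}$'' is exactly what forces $D_K(P,\hat{P}'|s_{i'}) = 0$ after the swap. Third, existence of a $D_K$--minimizer of the required coarse--grained form is supplied by Lemmas~\ref{lem:merge} and~\ref{lem:MergeCausal}, which I would cite rather than re-prove. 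The main obstacle is therefore not an estimate but the care needed to perform the replacement simultaneously on \emph{all} pure states while staying inside $\mathcal{M}^d_\infty$ and not enlarging the dimension; since the swaps act on disjoint memory states and each weakly lowers its own term in the decomposition, both are immediate and the conclusion follows.
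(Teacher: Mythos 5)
Your proposal is correct and follows essentially the same route as the paper's proof: restrict to a coarse-graining of causal states via Lemma~\ref{lem:MergeCausal}, then replace the morph on each ``unmerged''/pure memory state by the true causal-state morph $P(\future{X}|s_i)$, which zeroes those terms in the decomposition $\sum_i \pi_i D_K(P,\hat{P}|s_i)$ without touching the merged states, so the modified pre-model is again a minimizer with the claimed property. Your handling of the $d\geq d_c$ case, the measure-zero qualifier, and the observation that the replacement stays inside $\mathcal{M}^d_\infty$ are just slightly more explicit versions of what the paper leaves implicit.
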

In simpler words: when we look for a $D_K$--optimal pre-model by merging states of an $\varepsilon$--machine, we should not alter the future morphs of any ``unmerged'' states from how they are in the $\varepsilon$--machine.

\begin{theorem}
\label{thm:BoundOnUnifilar}
Let $P$ be any process.
Then, for every model dimension $d$, the algorithm in Box.2 produces a bound on the minimum distortion $D_e$ of a unifilar model by considering all possible mergers of  causal states, and searching over the next $K$ output statistics.
\begin{proof}
Lemma~\ref{lem:MergeCausal} tells us that a $D_K$--optimal pre-model exists that is formed by merging causal states,
 but says nothing as to whether such a pre-models is unifilar.
However, Lemma~\ref{lem:ChangingGoalposts} implies that such a $D_K$-optimized pre-model bounds the $D_e$--optimized unifilar models.
Moreover, $D_K$, as a function of $\hat{P}(X_{0:K} | \encod(\past{x}))$, depends only on the next $K$ output statistics.
As such, to find this minimum bound, we can iterate through the various coarse--grainings of casual states of a given dimension,
 and search through finite parameter space of $\hat{P}$ to minimize $D_K$.
The minimum found here will lower bound the lowest value of $D_e$ achievable by a unifilar model.
The proof is illustrated in~\fref{fig: illusbound}.
\begin{figure}[htbp]
	\centering
	\includegraphics[width = 0.9\linewidth]{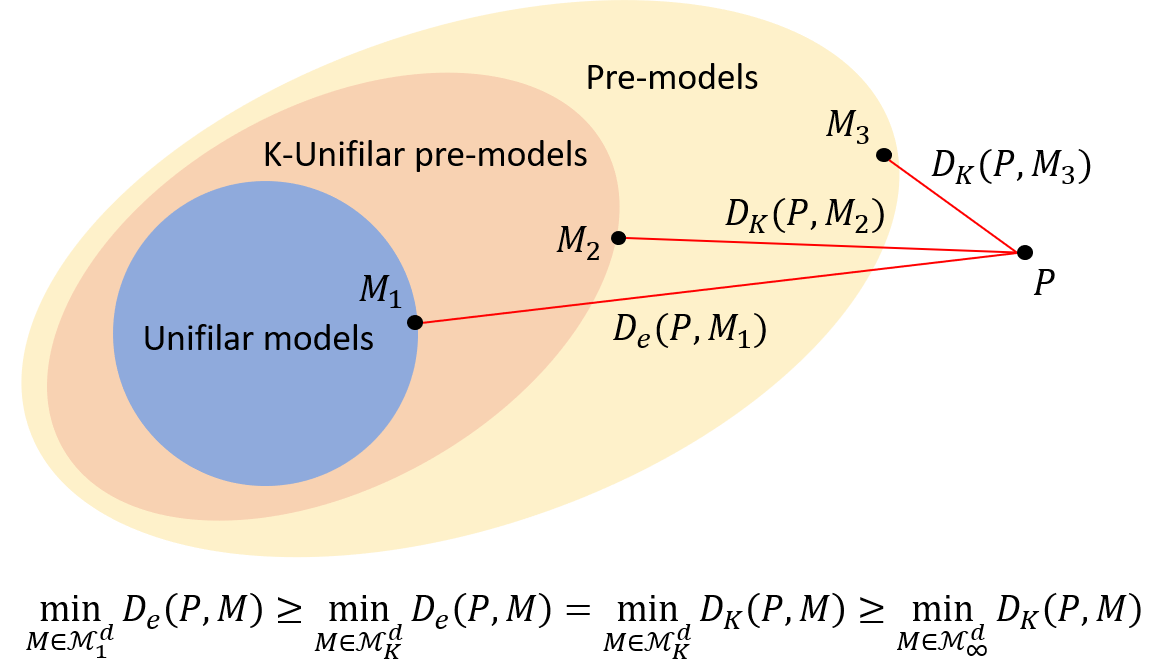}
	\caption{$P$ represents the given stochastic process.  For a given model dimension $d$, $\mathcal{M}_\infty^d$ represent all pre-models,  $\mathcal{M}_K^d$ represents all K-unifilar pre-models, and $\mathcal{M}_1^d$ represents all unifilar models.}
	\label{fig: illusbound}
\end{figure}
 \end{proof}
\end{theorem}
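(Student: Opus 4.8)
The plan is to show that the finite search carried out by Box~2 evaluates exactly $\min_{M\in\mathcal{M}^d_\infty} D_{\mr}(P,M)$ --- the smallest distortion, measured over the next $\mr$ outputs, of \emph{any} pre-model of dimension $d$ --- and that this quantity is a lower bound on $D_e$ for every unifilar model of dimension $d$. Throughout I take $K=\mr$, the finite Markov order of $P$ supplied to Box~2, and I use that $P$ has finite topological complexity $d_c$, so that its causal-state set $\mathcal{S}$ admits only finitely many coarse-grainings.

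First I would handle the trivial case $d\ge d_c$: Lemma~\ref{lem:EpsMinDist} supplies the $\varepsilon$--machine as a zero-distortion unifilar model of dimension $d_c\le d$; since the non-merging encoding is among those considered, Box~2 returns $0$, which is both the optimal unifilar distortion and (trivially) a lower bound on it. For the substantive case $1\le d<d_c$, I would take an arbitrary unifilar model $M=(\encod,\morph,\mathcal{P}_1)\in\mathcal{M}^d_1$. Being unifilar it is in particular $\mr$-unifilar, so Lemma~\ref{lem:stationaryDist} (equivalently Lemma~\ref{lem:MarkovOrderDistortion}) gives $D_e(P,M)=D_{\mr}(P,M)$. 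Since $\mathcal{M}^d_1\subseteq\mathcal{M}^d_\infty$, trivially $D_{\mr}(P,M)\ge\min_{M'\in\mathcal{M}^d_\infty}D_{\mr}(P,M')$, and combining these,
\begin{equation}
	\min_{M\in\mathcal{M}^d_1} D_e(P,M) \;\ge\; \min_{M'\in\mathcal{M}^d_\infty} D_{\mr}(P,M').
\end{equation}

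Then I would identify the right-hand side with the output of Box~2. By Lemma~\ref{lem:MergeCausal} the minimum over $\mathcal{M}^d_\infty$ is attained by a pre-model whose encoding $\encod$ is a coarse-graining of $\mathcal{S}$, so the outer minimisation ranges over the finite set of such coarse-grainings (Box~2, steps~(a) and~(c)). For each fixed coarse-graining, the definition~\eqref{eq:Dfutpast} of $D_{\mr}(P,\hat P)=\sum_{\past{x}}P(\past{x})\,D_{\mr}(P,\hat P\,|\,\past{x})$ shows it depends on the morph only through the length-$\mr$ statistics $\hat P(X_{0:\mr}\,|\,\encod(\past{x}))$; the inner minimisation over morphs is therefore a finite-dimensional optimisation over distributions on $\mathcal{X}^{\mr}$ attached to the $d$ memory states --- precisely Box~2, step~(b). (Lemma~\ref{lem:LeaveItAlone} records that at the optimum the unmerged states retain their $\varepsilon$--machine morphs; this is consistent with, but not needed for, the bound.) Returning the minimum over all coarse-grainings (steps~(c) and~(d)) thus gives $\min_{M'\in\mathcal{M}^d_\infty}D_{\mr}(P,M')$, which by the displayed inequality lower-bounds the minimum distortion over all unifilar models of dimension $d$, as claimed.

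I expect the main obstacle to lie in Lemma~\ref{lem:MergeCausal} and its precursor Lemma~\ref{lem:merge}: one must be careful that they establish the \emph{existence} of a $D_{\mr}$--optimal pre-model with a coarse-graining encoding --- so that the minimum \emph{value} is unaffected by restricting to such encodings --- rather than a statement about all optimisers, and the genuinely delicate regime is exactly $1<d<d_c$, where the $\varepsilon$--machine is unavailable and the past-reassignment argument of Lemma~\ref{lem:merge} supplies the content. A secondary point is bookkeeping: one must track carefully which objective ($D_e$ versus $D_{\mr}$) and which model class ($\mathcal{M}^d_1$ versus $\mathcal{M}^d_\infty$) appears at each step, invoking Lemmas~\ref{lem:stationaryDist}--\ref{lem:ChangingGoalposts} in the order above so that every inequality points toward a lower bound.
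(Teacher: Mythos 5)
Your proposal is correct and follows essentially the same route as the paper: it reduces $D_e$ of any unifilar (hence $\mr$-unifilar) model to $D_{\mr}$ via Lemma~\ref{lem:MarkovOrderDistortion}, lower-bounds by minimizing over the larger pre-model class (which is exactly the content of Lemma~\ref{lem:ChangingGoalposts}, here unpacked rather than cited), and then uses Lemma~\ref{lem:MergeCausal} plus the fact that $D_{\mr}$ depends only on the next-$\mr$ statistics to identify the finite search of Box~2. The only cosmetic slip is attributing $D_e=D_{\mr}$ to Lemma~\ref{lem:stationaryDist}; the correct reference for $K$-unifilar pre-models is Lemma~\ref{lem:MarkovOrderDistortion}, which you also cite.
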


\subsection{Minimum--distortion approximations of Markov processes}
For Markov processes, this bound is tight.
First:
\begin{lemma}
\label{lem:MergeRUnif}
Let $P$ be a process with Markov order $\mr$.
Then, any encoding map $\encod$ formed by coarse--graining the causal states of $P$ admits an $\mr$-unifilar pre-model.
An $\mr$-unifilar pre-model that minimizes $D_{\mr}(P,\hat{P})$ is in this set.
\begin{proof}
Let $\varepsilon$ be the encoding map onto causal states, and $\mathcal{C}: \mathcal{S} \to \mathcal{R}$ be the coarse--graining map from causal states $\mathcal{S}$ to memory states $\mathcal{R}$, such that $\encod = \mathcal{C} \cdot \varepsilon$.
We must demonstrate that there exists a $\mathcal{P}_\mr$ such that $\mathcal{P}_{\mr} \left( \encod\!\left(\past{x}\right), x_{0:\mr} \right) = \encod\!\left(\past{x} x_{0:\mr}\right)$ for all $\past{x}$ and all $x_{0:\mr}$.
Since $P$ has Markov order $\mr$, there exists an $\tilde{\varepsilon}: \mathcal{X}^{\otimes \mr} \to \mathcal{S}$ that acts on only the final $\mr$ symbols of the history to identify the causal states, and for every $\past{x}$ that ends in the same $x_{-\mr:0}$, $\varepsilon(\past{x}) = \tilde{\varepsilon}(x_{-\mr:0})$.
We thus have an $\tilde{\encod} = \mathcal{C}\cdot\tilde{\varepsilon}$ that maps to the same memory state as $\encod$ for every $x_{-\mr:0}$ and every $\past{x}$ that ends in $x_{0:\mr}$.

Thus, now consider $\mathcal{P}_{\mr} \left( \encod\!\left(\past{x}\right), x_{0:\mr} \right) = \encod\!\left(\past{x} x_{0:\mr}\right) = \tilde{\encod}(x_{0:\mr})$.
A candidate $\mathcal{P}_{\mr}$ can be defined such that it is completely independent of its first argument (the current machine state $r\in\mathcal{R}$) and instead takes exactly the same value as $\tilde{\encod}$ applied to its second input (the recently output word $x_{0:\mr} \in \mathcal{X}^\mr$).
Thus, there exists an $\mr$-unifilar pre-model for any such $\encod$.

Now, from Lemma~\ref{lem:MergeCausal},
 we know that the $D_\mr$ optimal ($\infty$-unifilar) pre-model $(\encod,\morph)$ is formed by merging causal states,
 but a priori we have not demonstrated that such a model is $\mr$-unifilar.
To calculate $D_\mr$,
 we need only evaluate the probabilities associated with the first $\mr$ outputs.
Thus, we can define $\mathcal{F}_\mr$ as per Lemma~\ref{lem:KUnifWordK} such that $\mathcal{F}_\mr$ and $\mathcal{F}$ perfectly agree on the statistics of the first word of length $\mr$ in the morph.
Then, taking the encoding map of $\encod$ from this pre-model, we form a $\mr$-unifilar model $(\encod, \morph_\mr, \mathcal{P}_\mr)$ where $\mathcal{P}_\mr$ is defined as above.
Since $D_\mr$ only depends on the first $\mr$ steps of the future, minimizing the value of $D_\mr(P,\hat{P})$ provides the optimal $\mr$-unifilar pre-model.
As such, we have formed a minimum--distortion $\mr$-unifilar pre-model.
\end{proof}
\end{lemma}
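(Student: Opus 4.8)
The plan is to prove the two assertions separately, in both cases exploiting the finite Markov order $\mr$ to pin down the current memory state from only the most recent $\mr$ symbols.

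For the first assertion---that every coarse--graining of causal states admits an $\mr$-unifilar pre-model---I would begin by writing $\encod = \mathcal{C} \cdot \varepsilon$, where $\varepsilon$ is the usual causal-state map and $\mathcal{C}:\mathcal{S}\to\mathcal{R}$ is the coarse--graining. The crucial observation is that Markov order $\mr$ forces each causal state to be a function of only the final $\mr$ symbols of the history: there is a map $\tilde\varepsilon:\mathcal{X}^{\otimes\mr}\to\mathcal{S}$ with $\varepsilon(\past{x}) = \tilde\varepsilon(x_{-\mr:0})$. Composing gives $\tilde\encod := \mathcal{C}\cdot\tilde\varepsilon$. I would then note that appending exactly $\mr$ fresh symbols to any history overwrites its trailing length-$\mr$ window entirely, so $\encod(\past{x}\,x_{0:\mr}) = \tilde\encod(x_{0:\mr})$ with no residual dependence on $\past{x}$. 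This motivates the update rule $\mathcal{P}_\mr(r, x_{0:\mr}) := \tilde\encod(x_{0:\mr})$, which simply discards its first argument. It is manifestly deterministic and satisfies $\mathcal{P}_\mr(\encod(\past{x}), x_{0:\mr}) = \encod(\past{x}\,x_{0:\mr})$ by construction, so $(\encod, \morph, \mathcal{P}_\mr)$ is an $\mr$-unifilar pre-model for any morph $\morph$ compatible with $\encod$.

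For the second assertion I would invoke the hierarchy already established. By Lemma~\ref{lem:MergeCausal} (together with Lemma~\ref{lem:EpsMinDist} handling the trivial case $d\geq d_c$ via the $\varepsilon$--machine), there is a $D_\mr$-minimizing pre-model $(\encod,\morph)$ over all of $\mathcal{M}^d_\infty$ whose encoding is a coarse--graining of causal states. The key point is that $D_\mr(P,\hat P)$, by definition, depends on the morph only through the distribution it assigns to words of length $\mr$. I would therefore replace $\morph$ by the word-$\mr$ morph $\morph_\mr$ furnished by Lemma~\ref{lem:KUnifWordK}, chosen to agree with $\morph$ on the first length-$\mr$ word, which leaves $D_\mr$ unchanged. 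By the first assertion, $(\encod,\morph_\mr,\mathcal{P}_\mr)$ is then a genuine $\mr$-unifilar pre-model lying in the coarse--graining set. A short squeeze argument closes the proof: since $\mathcal{M}^d_\mr \subseteq \mathcal{M}^d_\infty$, the minimum of $D_\mr$ over $\mr$-unifilar pre-models is at least its minimum over all pre-models; but the pre-model just constructed is $\mr$-unifilar and attains the latter minimum, forcing the two minima to coincide and exhibiting an explicit minimizer inside the set.

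I expect the main obstacle to be the bookkeeping around the morph truncation in the second assertion, where two things must be verified without being conflated. First, discarding everything in $\morph$ beyond the length-$\mr$ statistics must genuinely leave $D_\mr$ invariant; this is immediate from the definition of $D_\mr$, which samples only $\hat P(X_{0:\mr}\mid\encod(\past{x}))$. Second, the truncated object must still be a valid pre-model rather than a dangling partial specification---and this is precisely what Lemma~\ref{lem:KUnifWordK} secures, since it shows that a word-$\mr$ morph together with $\mathcal{P}_\mr$ extends uniquely and consistently to a full future morph via the unifilar recursion. (Note that this extended morph need not agree with the original $\morph$ on words longer than $\mr$; only agreement on length-$\mr$ words is required, and that suffices for $D_\mr$.) Once these two facts are isolated, the remainder is the routine inclusion-of-sets argument, and the finiteness of the coarse--grainings of the $d_c$ causal states renders the associated minimization a genuine finite search.
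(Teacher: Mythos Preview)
Your proposal is correct and follows essentially the same route as the paper: both parts hinge on the observation that Markov order $\mr$ lets you define $\tilde\encod$ on the last $\mr$ symbols and hence a $\mathcal{P}_\mr$ that ignores its first argument, and the second assertion is handled by truncating the morph via Lemma~\ref{lem:KUnifWordK} without changing $D_\mr$. Your write-up is in fact slightly more explicit than the paper's about the squeeze argument ($\mathcal{M}^d_\mr \subseteq \mathcal{M}^d_\infty$) and about why the extended morph need not match the original beyond length~$\mr$, which is helpful.
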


When the process is Markovian, these lemmas then give us a systematic method for finding the minimum distortion unifilar hidden Markov model.
\begin{theorem}
\label{thm:MarkovTight}
Let $P$ be a Markov process.
Then, for every model dimension $d$, we can find a minimum distortion unifilar model by merging causal states, and searching over the next--output statistics.
\end{theorem}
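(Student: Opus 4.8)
The plan is to promote the lower bound of Theorem~\ref{thm:BoundOnUnifilar} to an equality in the Markov case, i.e.\ to show that the bound it produces is actually attained by a bona fide unifilar model. Throughout we set $\mr = 1$. First I would dispose of the degenerate dimension regimes: if $d \geq d_c$ the $\varepsilon$--machine itself has zero distortion (Lemma~\ref{lem:EpsMinDist}) and is already a (trivial) coarse--graining of the causal states carrying the $\varepsilon$--machine's own next--output statistics, and the case $d_c = 1$ is immediate; so it suffices to treat $1 < d < d_c$.

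Next I would invoke Lemma~\ref{lem:MergeRUnif} with $\mr = 1$: every encoding $\encod$ obtained by coarse--graining the causal states of $P$ admits a $1$--unifilar pre--model, and a $1$--unifilar pre--model minimizing $D_1(P,\hat{P})$ lies in this set. Concretely, because $P$ has Markov order $1$ the causal state is a function $\tilde{\varepsilon}$ of the single most recent symbol, so for a coarse--graining $\mathcal{C}$ the transition rule $\mathcal{P}_1(r, x_0) := \mathcal{C}(\tilde{\varepsilon}(x_0))$ — which ignores the current memory state $r$ entirely — satisfies the $1$--unifilarity constraint $\mathcal{P}_1(\encod(\past{x}), x_0) = \encod(\past{x}\, x_0)$, using that merging commutes with $\tilde{\varepsilon}$. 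Hence minimizing $D_1$ over the (finitely many) coarse--grainings of the causal states together with the (finite--dimensional) choice of next--output morph $\morph_1$ computes exactly $\min_{M \in \mathcal{M}^d_1} D_1(P, M)$, and the minimum is attained.

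I would then use the observation recorded just after Lemma~\ref{lem:KUnifWordK}: a $1$--unifilar pre--model is precisely a unifilar hidden Markov model. Thus $\mathcal{M}^d_1$ is exactly the set of unifilar models of memory dimension at most $d$, and the quantity above is $\min$ of $D_1(P,M)$ over all such models. Finally, Lemma~\ref{lem:MarkovOrderDistortion} with $K = 1$ (equivalently Lemma~\ref{lem:stationaryDist}) gives $D_e(P, M) = D_1(P, M)$ for every such $M$, so that $\min_{M \in \mathcal{M}^d_1} D_1(P,M) = \min_{M \in \mathcal{M}^d_1} D_e(P,M)$. Chaining the three steps: the algorithm of Box~2 with $\mr = 1$ — enumerate coarse--grainings of the causal states, and for each minimize $D_1$ over the next--output distributions — returns both the value of the minimum distortion over all unifilar models of dimension $\le d$ and a realizing model.

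The step that needs the most care is the second one: a $D_1$--optimal coarse--grained pre--model must be taken to be genuinely $1$--unifilar rather than merely $\infty$--unifilar. This is precisely what Lemma~\ref{lem:MergeRUnif} secures, and I would make sure to check the two supporting facts it rests on — that the "read off the last symbol" transition rule is consistent with the merged encoding (it is, since the causal--state labelling factors through the last symbol in a Markov process), and that restricting the future morph to a distribution over single symbols (Lemma~\ref{lem:KUnifWordK} with $K = 1$) leaves $D_1$ unchanged (it does, since $D_1$ only sees the first output). Everything else is bookkeeping: noting that the search space is finite and finite--dimensional so the minimum is achieved, and handling the $d \geq d_c$ and $d_c = 1$ regimes separately as above.
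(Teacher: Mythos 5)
Your proposal is correct and takes essentially the same route as the paper's proof: handle $d \geq d_c$ via Lemma~\ref{lem:EpsMinDist}, then for $d < d_c$ use the coarse--graining result (Lemma~\ref{lem:MergeCausal}, subsumed in your invocation of Lemma~\ref{lem:MergeRUnif} specialized to $\mr = 1$) to conclude that the optimal model can be taken unifilar with next--output morphs. Your additional explicit steps --- that $1$-unifilar pre-models are exactly unifilar models and that $D_e = D_1$ for them (Lemma~\ref{lem:MarkovOrderDistortion}) --- are points the paper leaves implicit, and they tighten rather than change the argument.
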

\begin{proof}
If $d\geq d_c$, this encoding is the $\varepsilon$-machine (Lemma~\ref{lem:EpsMinDist}).
In the case $d<d_c$, we use Lemma~\ref{lem:MergeCausal}, to find a $\encod^d_{\min}$ that is a coarse--graining of causal states.
Then Lemma~\ref{lem:MergeRUnif} specialized to $\kappa=1$ implies that this model can be made unifilar.
\end{proof}

\section{Classical minimum distortion models: Examples}
\label{sm:boundseg}
As illustrative examples, we present the minimum--distortion approximate classical model for asymmetric process with $p=0.3,q=0.8$ in~\fref{fig:approxCA} and for quasi cycle with $p = 0.5 ,\delta = 0.1$ in~\fref{fig:approxQC}.

\begin{figure}[htbp]
	\centering
	\includegraphics[width=0.65\linewidth]{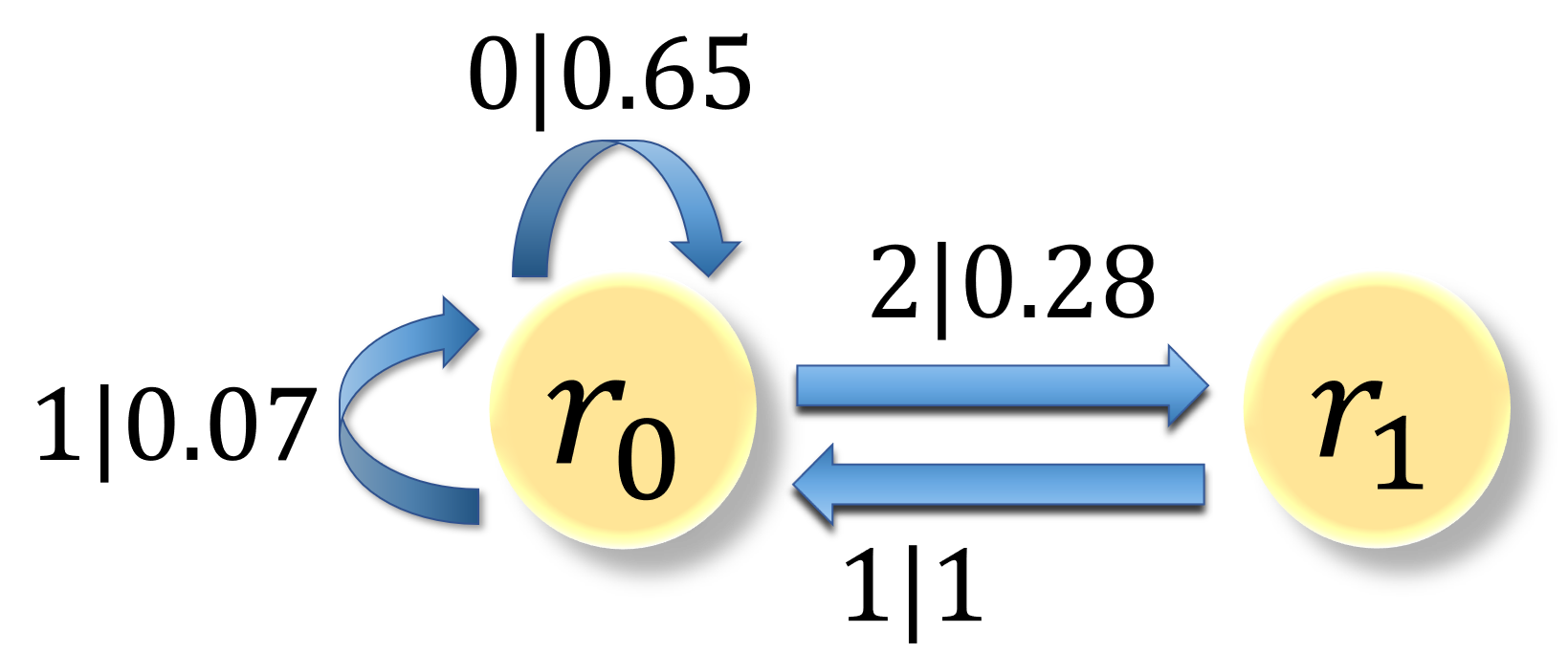}
	\caption{\caphead{Approximate model for asymmetric process.}
	The approximate classical model for the asymmetric example with $p=0.3$ and $q = 0.8$.
	$s_0$ and $s_1$ of the original asymmetric process are merged into $r_0$, with transition probabilities identified by a minimization. $s_2$ is mapped to $r_1$ with an unchanged future morph.}
	\label{fig:approxCA}
\end{figure}

\begin{figure}[htbp]
	\centering
	\includegraphics[width=0.65\linewidth]{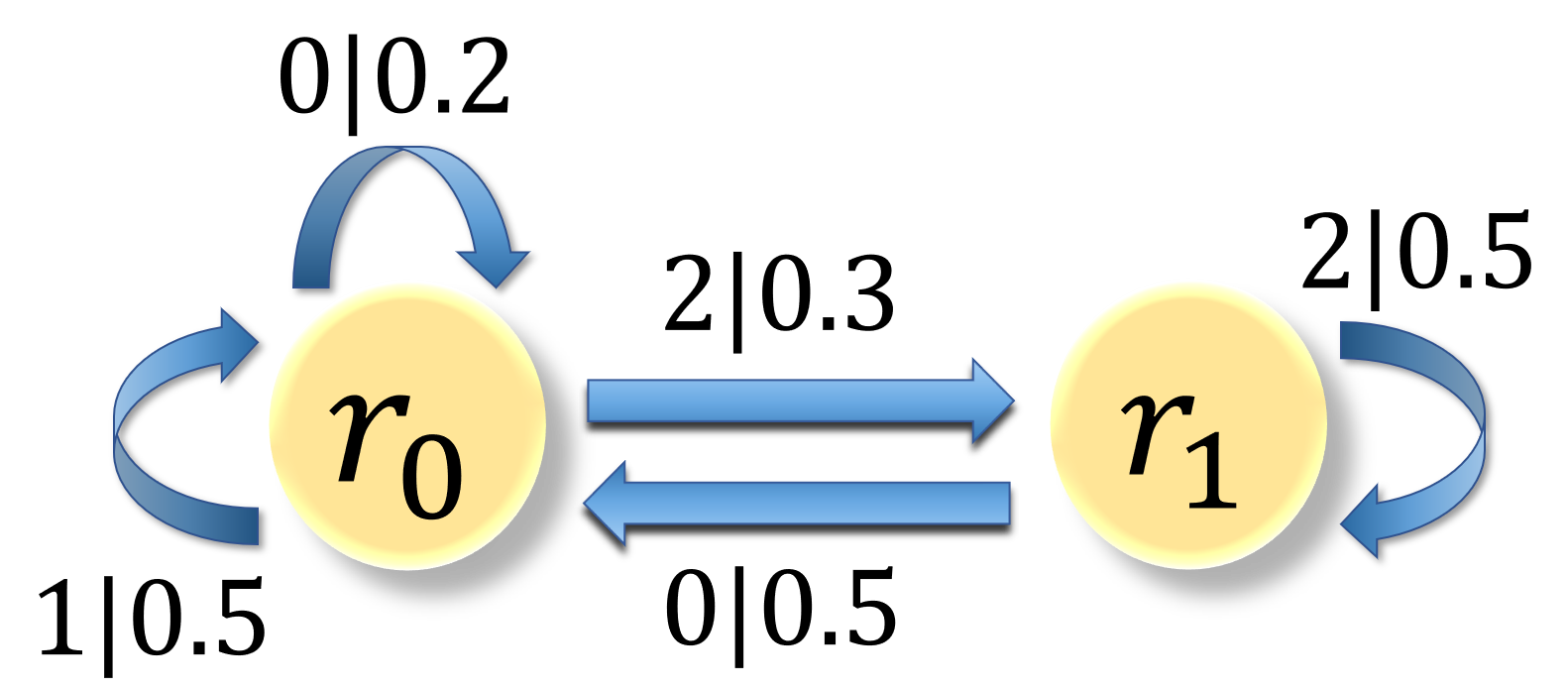}
	\caption{\caphead{Approximate model for quasicycle.}
	The approximate classical model for the quasicycle example with $p=0.5$ and $\delta = 0.1$.
	$s_0$ and $s_1$ of the original quasicycle process are merged into $r_0$.
	The transition probability is minimized by going through the probability
	vector space. $s_2$ is mapped to $r_1$ with an unchanged future morph.}
	\label{fig:approxQC}
\end{figure}

For the third example, we evaluate $D_N$ as a lower bound on the error of the optimal classical model for any discrete renewal process with $N$ states, as shown in~\fref{fig:ex3error}.

\clearpage
\section{Details of experimental quantum probability distribution.}
\label{app:IBM}
\fref{fig:IBMprob} compares the conditional probability distribution realized on ``ibmq athens" with the corresponding noiseless distribution of the learned quantum model.
\begin{figure}[htbp]
	\centering
	\includegraphics[width=0.6\linewidth]{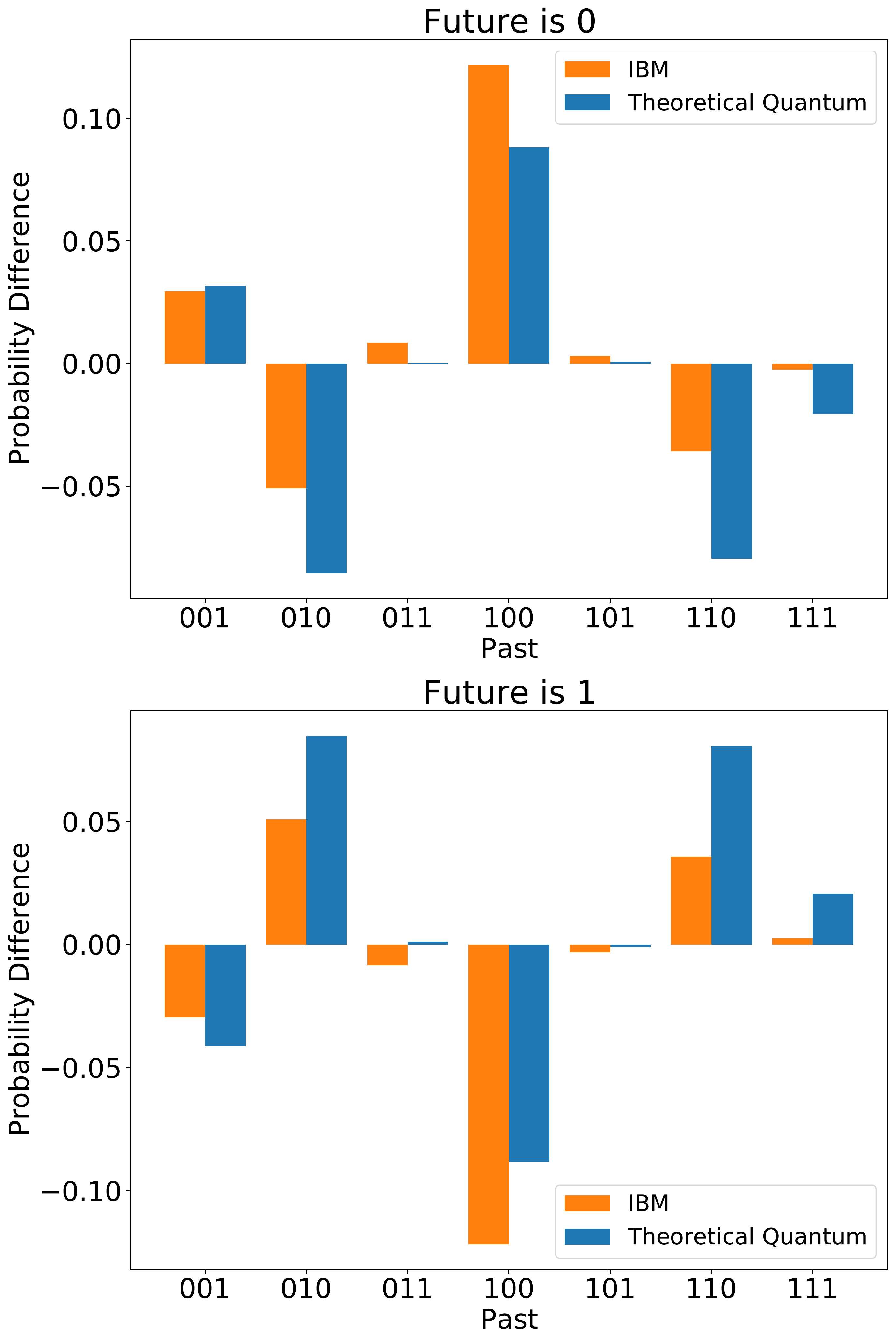}
	\caption{
	\caphead{Comparison of experimental (orange) and theoretical (blue) realizations of the learned quantum model to the true process. }
	Each datum represents the difference between the model's conditional probability for a given past, and the true process's conditional probability.
	}
	\label{fig:IBMprob}
\end{figure}
\clearpage

\end{document}